\documentclass[final,3p,times]{elsarticle}

\usepackage{amssymb}   
\usepackage{caption}   
\usepackage{subfigure} 
\usepackage{amsmath}   
\usepackage{setspace}  
\usepackage{algorithm} 
\usepackage{algpseudocode} 
\usepackage{makecell}  
\usepackage{multirow}  
\usepackage[justification=centering]{caption} 
\usepackage{multicol}
\usepackage{stfloats}  
\usepackage{color}     
\usepackage{amsthm}    
\usepackage{threeparttable}
\usepackage{url}

\usepackage{dcolumn}

\usepackage{algorithm}
\usepackage{algpseudocode}

\newtheorem{definition}{Definition}
\newtheorem{example}{Example}
\newtheorem{theorem}{Theorem}




\begin{document}

\begin{frontmatter}
\title{Repetitive nonoverlapping sequential pattern mining}

\author[1]{Meng Geng} 

\author[1]{Youxi Wu}	
	
\author[2]{Yan Li}
		
\author[3]{Jing Liu}
	
\author[4]{Philippe Fournier-Viger}

\author[5]{Xingquan Zhu}

\author[6]{Xindong Wu}
		
		
		\address[1]{School of Artificial Intelligence, Hebei University of Technology, Tianjin 300401, China}
		
		\address[2]{School of Economics and Management, Hebei University of Technology, Tianjin 300401, China}
		
		\address[3]{State Key Laboratory of Reliability and Intelligence of Electrical Equipment, Hebei University of Technology, Tianjin 300401, China}
  
		\address[4]{Shenzhen University, Shenzhen, China}
  
		\address[5]{Department of Computer \& Electrical Engineering and Computer Science, Florida Atlantic University, FL 33431, USA}
		
		\address[6]{Research Center for Knowledge Engineering at the Zhejiang Lab, Hangzhou 311121, China}
		
\begin{abstract}
Sequential pattern mining (SPM) is an important branch of knowledge discovery that aims to mine frequent sub-sequences (patterns) in a sequential database. Various SPM methods have been investigated, and most of them are classical SPM methods, since these methods only consider whether or not a given pattern occurs within a sequence. Classical SPM can only find the common features of sequences, but it ignores the number of occurrences of the pattern in each sequence, i.e., the degree of interest of specific users. To solve this problem, this paper addresses the issue of repetitive nonoverlapping sequential pattern (RNP) mining and proposes the RNP-Miner algorithm. To reduce the number of candidate patterns, RNP-Miner adopts an itemset pattern join strategy. To improve the efficiency of support calculation, RNP-Miner utilizes the candidate support calculation algorithm based on the position dictionary. To validate the performance of RNP-Miner, 10 competitive algorithms and 20 sequence databases were selected. The experimental results verify that RNP-Miner outperforms the other algorithms, and using RNPs can achieve a better clustering performance than raw data and classical frequent patterns. All the algorithms were developed using the PyCharm environment and can be downloaded from https://github.com/wuc567/Pattern-Mining/tree/master/RNP-Miner.

\end{abstract}
		

\clearpage

			\begin{keyword}
{Sequential pattern mining \sep repetitive sequential pattern \sep position dictionary \sep itemset pattern join \sep clustering performance}
   
			\end{keyword}
		\end{frontmatter}
		
\section{Introduction}
In the era of big data, data mining refers to the process of searching for hidden information in a large amount of data \cite{2tmis2022}. A sequence database is a kind of big data that consists of multiple sequences. Each sequence has an ordered list of itemsets, and an itemset represents a collection of simultaneously occurring items \cite{3tkde2020}. Sequential pattern mining (SPM) is a well-developed data mining research field that aims to discover useful but potential patterns from a sequence database \cite{4tkde2020}. A variety of SPM methods have been derived for different mining requirements, including tri-partition SPM \cite{5ins2020tri}, negative SPM \cite{7tnnls2019neg}, high-utility SPM \cite{9tkde2021utility, 10tkdd2022utility}, order-preserving SPM \cite{11tcyb2022OPP, 12tkde2022OPR}, episode SPM \cite{13episodepattern, 14episoderule}, spatial co-location SPM \cite{15tkdespatial2018, 16spatial2022}, and contrast SPM \cite{17tcyb2021constrast, 18jessica2020}.

A sequence can be divided into two types: complex general sequence and simple sequence. For complex general sequences, each sequence is an order list of itemsets, and an itemset represents a collection of simultaneously occurring items \cite{21tkde2019Truong}. Complex general sequences are often used to represent customer purchase behaviors. For example, assuming  customer 1 first purchased items ``\textit{a}'' and ``\textit{c}'', then bought ``\textit{a}'' and ``\textit{c}'' again, then purchased ``\textit{c}'', then bought ``\textit{a}'', ``\textit{c}'', and ``\textit{e}'', and finally purchased ``\textit{c}''. Thus, the shopping sequence of customer 1 is $ \textbf{s}_1 $ = [\textit{ac}][\textit{ac}][\textit{c}][\textit{ace}][\textit{c}]. For simple sequences, a sequence only consists of items \cite{61li2022apind, 62shi2020apin}. Simple sequences are often used in the field of bioinformatics, such as DNA sequence and protein sequence. For example, sequence fragment \textbf{s}=\textit{attaaa} in SARS-CoV-2 virus is a simple sequence. Simple sequence can be seen as a special case of a complex general sequence, since each itemset has only one item. 

Now, one of the disadvantages of classical SPM for complex general sequences is that it only considers whether or not a given pattern occurs within a sequence, but ignores the number of occurrences of the pattern in a sequence \cite{19classicalSPM, 20tkdd2019gan}. For example, the support of pattern [\textit{ac}] in sequence [\textit{ac}][\textit{ac}][\textit{c}][\textit{ace}][\textit{c}] is one according to classical SPM, despite the fact that pattern [\textit{ac}] occurs more than once in this sequence. Thus, classical SPM ignores the repetition of the pattern in each sequence, which can lead to the loss of some potential patterns. Classical SPM can discover the common purchase behaviors of most users \cite{21IEEE2023}. For example, assuming there is a pattern \textbf{p} = [\textit{A}][\textit{B}][\textit{CD}][\textit{E}], where \textit{A}, \textit{B}, \textit{C}, \textit{D}, and \textit{E} are buying tickets, booking hotels, purchasing food and vegetable, and booking destination taxis, respectively. If we use classical SPM method, we can use this pattern to determine whether users have tourism travel behavior. However, it is difficult for us to use this pattern to determine the degree of user interest in the travel. If pattern \textbf{p} = [\textit{A}][\textit{B}][\textit{CD}][\textit{E}] occurs ten times in a sequence, it means that a user travels ten times. Users who travel ten times clearly have a higher level of interest in tourism than those who travel once. To cluster the users according to their behavior sequences, it is easy to obtain better clustering performance based on repetitive SPM than classical SPM method, since repetitive SPM calculates the number of occurrences of a pattern, while classical SPM only considers whether a pattern occurs in a sequence. Hence, repetitive SPM can discover more useful information than classical SPM.

Most repetitive SPM methods were dedicated to discovering specific sub-sequences in simple sequences \cite{24nosep}. Gap-constraint SPM is a kind of repetitive SPM method that can avoid discovering a large number of meaningless patterns \cite{25OWSP2021tmis}. However, it is difficult to set an appropriate gap constraint when the user does not know the characteristics of the sequence. To deal with this problem, self-adaptive gap constraints were proposed to constrain the sequence adaptively with the suitable gap \cite{26SNP2022apin}. Currently, there are many versions of repetitive SPM, such as one-off SPM \cite{29HAOP2021eswa} and nonoverlapping SPM \cite{26TKDEchang}, where the one-off method means that each item can be used at most once, and the nonoverlapping method means that each item cannot be reused by the same position in the pattern, but can be reused by different positions. In this paper, we select the nonoverlapping method to deal with the repetition SPM. The reasons are as follows. First of all, the one-off method is more likely to lose feasible occurrences, while the nonoverlapping method is less likely. For example, assuming a user lives in city $X$, buys a ticket to city $Y$, books a hotel in city $Y$, buys a ticket to city $Z$, books a hotel in city $Z$, and buys a ticket to return to $X$. The corresponding sequence is [\textit{A}][\textit{B}][\textit{A}][\textit{B}][\textit{A}]. If we use pattern [\textit{A}][\textit{B}][\textit{A}] to represent a trip, then according to the one-off method, there is only one trip in [\textit{A}][\textit{B}][\textit{A}][\textit{B}][\textit{A}], since the second [\textit{A}] cannot be reused for the next trip, while according to the nonoverlapping method, there are two trips. This example shows that the nonoverlapping method can overcome the shortage of the one-off method. More importantly, the previous works have shown that the support calculation of the one-off method is an NP-hard problem \cite{29HAOP2021eswa}, while that of the nonoverlapping method is a polynomial problem \cite{Shen2017SCIS}. Hence, we select the nonoverlapping method to deal with the repetitive SPM.

Inspired by repetitive SPM in simple sequences, we focus on repetitive SPM in complex general sequences and propose a repetitive non-overlapping SPM method for complex general sequences. Example \ref {e-nonoverlapping} illustrates the difference between classical SPM and nonoverlapping SPM.

\begin{example}\label{e-nonoverlapping}
Assuming we have a sequence database $D$, as shown in Table \ref{t-database}, and a pattern \textbf{p} = $p_{1}p_{2}p_{3}$ = [$ac$][$c$][$c$].

\begin{table}[ht]
	\centering
	\scriptsize
	\caption{Sequence database $D$}
	\begin{tabular}{lc}
            \hline
		{\textit{Sid}}     & {Sequence}  \\
		\hline
		$ \textbf{s}_{1} $  & [$ac$][$ac$][$c$][$ace$][$c$]   \\
		$ \textbf{s}_{2} $  & [$bf$][$ace$][$c]$[$de$]   \\
		$ \textbf{s}_{3} $  & [$acd$][$f$][$abc$][$ce$][$acef$][$ce$]   \\
		\hline
	\end{tabular}
    \label{t-database}
\end{table}
\begin{sloppypar}
    Pattern \textbf{p} occurs in $\textbf{s}_{1}$ = $ s_{11}s_{12}s_{13}s_{14}s_{15}$ =  [$ac$][$ac$][$c$][$ace$][$c$]  and $ \textbf{s}_{3} = s_{31}s_{32}s_{33}s_{34}s_{35}s_{36}$ = [$acd$] [$f$][$abc$][$ce$][$acef$][$ce$]. Therefore, the support of \textbf{p} in $D$ is two according to classical SPM. The nonoverlapping condition means that each itemset cannot be reused by the same position in the pattern, but can be reused by different positions. In $\textbf{s}_{1}$ = [$ac$][$ac$][$c$][$ace$][$c$], $<$1,2,3$>$ and $<$2,3,4$>$ satisfy the nonoverlapping condition, since in $ < $1,2,3$ > $, $ p_{2} =$ [$c$] $\subseteq s_{12} =$ [$ac$]  and $ p_{3} = [c] \subseteq s_{13} = [c]$, that is $ s_{12} $ is used in the second position and $ s_{13} $ is used in the third position, and in $ < $2,3,4$ > $, $ p_1 = [ac] \subseteq s_{12} = [ac]$ and $ p_2 = [c] \subseteq s_{13} = [c] $, that is $ s_{12} $ is used in the first position and $ s_{13} $ is used in the second position. Thus, the support of \textbf{p} in $ \textbf{s}_{1} $ is two. Similarly, the support of \textbf{p} in $ \textbf{s}_{3}$ is also two. Hence, the support of \textbf{p} in the database by our method is four, which is different from the support by classical SPM.
\end{sloppypar}

\end{example}

Many repetitive nonoverlapping SPM algorithms, such as NOSEP \cite{24nosep}, SCP-Miner \cite{17tcyb2021constrast}, and SNP-Miner \cite{26SNP2022apin}, have been proposed to handle different tasks. However, there are two unsolved problems in these algorithms. 1. These algorithms only can mine patterns in sequences composed of items rather than itemsets. It will be more difficult and complex to generate candidate patterns in itemset sequences. 2. More importantly, these algorithms are not efficient when it comes to calculating the support, since these algorithms use linear search to find the position of the next sub-pattern in the sequence, which will cause a lot of redundant searches.

To tackle the above problems, this paper proposes a repetitive nonoverlapping sequential pattern (RNP) mining method. Compared with previous repetitive nonoverlapping SPM methods, the novelty of the RNP mining method lies in two aspects. 1. The RNP mining method discovers patterns in complex general sequences composed of itemsets. To generate candidate patterns, for simple sequences, only S-extension is used, while for complex general sequences, both S-extension and I-extension should be used. More importantly, more effective candidate pattern pruning strategies should be explored. 2. In support calculation, previous nonoverlapping SPM methods dealt with item sequences, while the RNP mining method deals with itemset sequences, which is more complex than previous methods. More importantly, the RNP mining method adopts a position dictionary to record indexes, which can effectively avoid redundant searches.

The main contributions of this paper are as follows:

1. To consider the repetition of a pattern in each sequence with itemsets, we developed RNP mining and proposed an algorithm called RNP-Miner to mine all RNPs.

2. To generate candidate patterns, pruning of the unpromising items (PUI) strategy and pruning of the unpromising patterns (PUP) strategy were proposed. Moreover, based on the Apriori property, RNP-Miner adopted an itemset pattern join strategy to reduce the number of candidate patterns.

3. To calculate the support, we utilized a position dictionary to store the position of each item and pattern and proposed a candidate support calculation algorithm, which calculates the supports of super-patterns based on the occurrence positions of sub-patterns and new extension itemsets.

4. To validate the performance of RNP-Miner, 10 competitive algorithms and 20 sequence databases were selected. The experimental results verify that RNP-Miner outperforms the other algorithms, and using RNPs can achieve a better clustering performance than classical frequent patterns.

\section{Related work}
The goal of SPM is to find sub-sequences (patterns) with a high frequency in sequential databases, which has been widely applied \cite{31pyramidpattern, 32generator2020kais}. Sequential databases are more common in real life, such as biological sequences \cite{33biological, 34biological}, web clicks \cite{35webclick}, and customer purchase data \cite{36purchase, 37purchase}. Sequential databases usually consider the order of the items, allowing an item to occur repeatedly at different positions of a sequence \cite{38tkdd2020}. To deal with different mining requirements, various forms of SPM have been proposed: Top-\textit{k} SPM \cite{7tnnls2019neg}, closed SPM \cite{39closed2020kbs, 40closed}, and maximal SPM \cite{41max2022apin} can reduce the number of mined patterns, and incremental SPM \cite{42incremental, 43incremental} and sliding-window SPM \cite{44slidingwindow, 45slidingwindow} can solve the problem of mining in dynamic sequential databases. In addition, to discover low-frequency but high-utility patterns, high-utility SPM methods \cite{46Truong2022appl} were proposed, such as RUP-Miner \cite{47RUP-Miner} and MDUS \cite{48MDUS}, and to discover missing events, negative SPM methods \cite{49tois2021neg} were devised, such as e-NSP \cite{50cao2016ensp} and e-RNSP \cite{51Dong2020ernsp}.

According to the type of candidate pattern generation strategy, SPM algorithms are mainly divided into three types: breadth-first strategy, such as GSP \cite{52GSP}; depth-first strategy based IDList structure, such as SPAM \cite{53SPAM}, Fast \cite{54Fast}, CM-SPAM \cite{55CM-SPAM}, and SUI \cite{52SUI}; and pattern-growth strategy, such as FreeSpan \cite{56FreeSpan} and PrefixSpan \cite{57FrefixSpan}. However, most SPM methods can be seen as classical SPM, since these methods only consider whether or not a given pattern occurs within a sequence and ignore the number of occurrences of the pattern in a sequence. For example, from Example \ref{e-nonoverlapping}, according to classical SPM, the support of pattern [$ac$][$c$][$c$] is two, since it occurs in $ \textbf{s}_{1} $ and $ \textbf{s}_{3} $ but does not occur in $ \textbf{s}_{2} $. Thus, classical SPM does not consider the fact that the pattern occurs in $ \textbf{s}_{1} $ and $ \textbf{s}_{3} $ more than once. Hence, classical SPM may neglect some potentially useful patterns. 

To deal with the shortcomings of classical SPM, repetitive SPM was proposed to consider the number of occurrences of the pattern in a sequence. Current repetitive SPM methods focus on mining patterns in sequences with items, while classical SPM is mainly meant to mine patterns in sequences with itemsets. For example, a DNA sequence is a sequence with items; it can be seen as a special case of a sequence with itemsets. Repetitive SPM can be divided into four types according to different constraints: general (no condition) \cite{27nocon2014apin}, disjoint \cite{60pmdb2022six}, one-off \cite{25OWSP2021tmis}, and nonoverlapping SPM methods \cite{58ins2021NWP}. Note that disjoint SPM was called nonoverlapping SPM in \cite{51Dong2020ernsp} and \cite{28disjoint}, but there are significant differences between the two forms.  For example, in sequence [$ac$][$ac$][$c$][$ace$][$c$], pattern [$ac$][$c$][$c$] has two nonoverlapping occurrences, i.e., $<$1,2,3$>$ and $<$2,3,4$>$. However, under the disjoint condition, the pattern only has one occurrence.



Note that episode mining \cite{13episodepattern} is similar to repetitive SPM. The differences between the two methods are the following: episode mining aims at mining a single sequence, while repetitive SPM can process one or more sequences; and in episode mining, the sequence is accompanied by timestamps, while in repetitive SPM, the sequence does not.

Recently, to address different requirements, various repetitive SPM methods have been studied. To improve the efficiency and set reasonable gap constraints when users lack prior knowledge, Wang et al. \cite{26SNP2022apin} proposed the SNP-Miner algorithm, which uses an incomplete Nettree structure to calculate the support and a pattern join strategy to generate candidate patterns. Contrast SPM was designed to mine contrast patterns from positive and negative sequences that could be used as features for a sequence classification task \cite{17tcyb2021constrast}. High average utility SPM was proposed to discover low-frequency but high average utility patterns \cite{30HANP2021kbs}. Inspired by three-way decisions \cite{65threeway2022}, nonoverlapping three-way SPM was developed to mine patterns composed of strong and medium interest items \cite{6tkdd2022NTP}. Negative SPM with one-off condition was proposed to discover missing events, which were used to predict the future trend in the traffic flow \cite{8tkdd2022neg}.

In summary, all of the above repetitive SPM methods process sequences with items. However, a general sequence consists of itemsets. For example, in a transaction sequence, an item represents a product, and an itemset represents all the products purchased by a user at the same time. Hence, we present the repetitive nonoverlapping SPM for a sequence database with itemsets, which is called RNP mining.

\section{Problem definition}

Let $ \textit{I} = \{i_{1}, i_{2}, \cdots, i_{t}\} $ be a set of items. An itemset \textit{v} is a subset of items, i.e., $ v \subseteq I $. A sequence $ \textbf{s} = s_{1}s_{2}\cdots s_{n} $ is a group of itemsets, where $ s_{j} $ ($ 1 \leq j \leq n $) is an ordered itemset. The size of \textbf{s} is its number of itemsets, denoted by size(\textbf{s}). The length of \textbf{s} is the sum of the number of items in each itemset, denoted by len(\textbf{s}). A sequence database $ D = {\textbf{s}_{1}, \textbf{s}_{2}, \cdots, \textbf{s}_{d}} $ is a set of sequences, where $ \textbf{s}_{c} (1 \leq c \leq d) $ is a sequence. For example, sequence database \textit{D} shown in Table \ref{t-database} has three sequences. Sequence $ \textbf{s}_{1} $ = [$ac$][$ac$][$c$][$ace$][$c$] has five itemsets and nine items, i.e., size($ \textbf{s}_{1} $) = 5 and len($ \textbf{s}_{1} $) = 9.

In many applications, such as sequence classification and sequence clustering, if users lack prior knowledge, they cannot set appropriate gaps. Thus, it is necessary to mine patterns based on the actual characteristics of the sequence, i.e., mining patterns with self-adaptive gaps, as defined in Definition \ref{definition1}.

\begin{definition}\label{definition1}
\rm (Pattern with Self-Adaptive Gaps \cite{17tcyb2021constrast}) A pattern with self-adaptive gaps can be expressed as $ \textbf{p} = p_{1} \ast p_{2}\ast\cdots\ast p_{m} $ or $ \textbf{p} = p_{1}p_{2} \cdots p_{m} $, where $ p_{k} (1 \leq k \leq m) $ is an itemset that is a subset of \textit{I}, and $\ast$ represents any number of any itemsets.
\end{definition}

\begin{definition}\label{definition2}
\rm (Occurrence and Nonoverlapping Occurrence) Given a sequence $ \textbf{s}=s_{1}s_{2}\cdots s_{n} $ and a pattern $ \textbf{p} = p_{1}p_{2} \cdots p_{m} $, $ l= <l_{1},l_{2}, \cdots ,l_{m}> $ is an occurrence of pattern \textbf{p} in sequence \textbf{s} iff $ p_{1} \subseteq s_{l_{1}} $, $ p_{2} \subseteq s_{l_{2}} \cdots $, and $ p_{m} \subseteq s_{l_{m}} (1 \leq l_{1} < l_{2} < \cdots < l_{m} \leq n) $. Assuming  there is another occurrence $ l' = <l_{1}',l_{2}', \cdots ,l_{m}'> $. \textit{l} and \textit{l}' are two nonoverlapping occurrences iff $\forall$ $ 1 \leq q \leq m $, $ l_{q} \neq l_{q}' $.
\end{definition}

\begin{definition}\label{definition3}
\rm (Support \cite{30HANP2021kbs}) The support of pattern \textbf{p} in sequence \textbf{s} is the maximum number of nonoverlapping occurrences represented by $sup(\textbf{p}, \textbf{s})$. The support of pattern \textbf{p} in sequence database \textit{D} is the sum of the supports in each sequence, i.e., $ sup(\textbf{p}, D) = \sum_{c=1}^{d} sup(\textbf{p},\textbf{s}_c )$.
\end{definition}

\begin{example}\label{e-support}
    We use sequence database $D$ in Table \ref{t-database}. Assuming we have a pattern \textbf{p} = $ p_1p_2 $ = [$ac$][$c$]. All occurrences of pattern \textbf{p} in $ \textbf{s}_{1} = s_{11}s_{12}s_{13}s_{14}s_{15} $ = [$ac$][$ac$][$c$][$ace$][$c$] are $ < $1,2$ > $, $ < $1,3$ > $, $ < $1,4$ > $, $ < $1,5$ > $, $ < $2,3$ > $, $ < $2,4$ > $, $ < $2,5$ > $, and $ < $4,5$ > $. Among them, $ < $1,3$ > $ and $ < $2,3$ > $ are two overlapping occurrences, since 3 is used in the same position in the two occurrences; while $ < $1,2$ > $ and $ < $2,3$ > $ are two nonoverlapping occurrences, since 2 is used in different positions in the two occurrences. Moreover, the maximum nonoverlapping occurrences are $ < $1,2$ > $, $ < $2,3$ > $, and $ < $4,5$ > $. Thus, the support of pattern \textbf{p} in sequence $ \textbf{s}_{1} $ is $ sup(\textbf{p}, \textbf{s}_{1}) $ = 3. Similarly, the supports of pattern \textbf{p} in sequences $ \textbf{s}_{2} $ and $ \textbf{s}_{3} $ are one and three, respectively. Hence, the support of pattern \textbf{p} in sequence database \textit{D} is $ sup(\textbf{p}, D) = \sum_{c=1}^{3} sup(\textbf{p},\textbf{s}_c ) $ = 3+1+3 = 7.
\end{example}

\begin{definition}\label{definition4}
\rm (RNP) In a sequence database \textit{D}, a pattern \textbf{p} is a repetitive nonoverlapping frequent sequential pattern with self-adaptive gap (RNP) iff its support in \textit{D} is no less than the user-specified threshold \textit{minsup}, i.e., $ sup(\textit{p}, D) \geq minsup$. Given \textit{D} and \textit{minsup}, our problem is to discover all RNPs.
\end{definition}

\begin{example}\label{e-rnp}
In Example \ref{e-support}, the support of pattern \textbf{p} = [$ac$][$c$] in database \textit{D} is 7. According to Definition \ref{definition4}, if \textit{minsup} = 6, then pattern \textbf{p} = [$ac$][$c$] is an RNP. In sequence database \textit{D} in Table \ref{t-database}, all RNPs are [$a$], [$c$], [$e$], [$ac$], [$a$][$c$], [$c$][$c$], [$ac$][$c$], and [$c$][$c$][$c$], which are shown in Table \ref{t-rnp}.

\begin{table}[ht]
	\centering
	\scriptsize
	\caption{All RNPs and their supports}
	\begin{tabular}{lcccc}
		 \hline
		{Pattern}	& {$ sup(\textbf{p}, \textbf{s}_{1}) $}	& {$ sup(\textbf{p}, \textbf{s}_{2}) $}	& {$ sup(\textbf{p}, \textbf{s}_{3}) $}	& {$ sup(\textbf{p}, D) $}	\\
		\hline
		{[$a$]}  & {3} & {1} & {3} & {7}   \\
		{[$c$]} & 5 & 2 & 5 & 12 \\
		{[$e$]} & 1 & 2 & 3 & 6 \\
		{[$ac$]} & 3 & 1 & 3 & 7 \\
		{[$a$][$c$]} & 3 & 1 & 3 & 7 \\
		{[$c$][$c$]} & 4 & 1 & 4 & 9 \\
		{[$ac$][$c$]} & 3 & 1 & 3 & 7 \\
		{[$c$][$c$][$c$]} & 3 & 0 & 3 & 6 \\
		\hline
	\end{tabular}
    \label{t-rnp}
\end{table}
\end{example}

\section{Proposed algorithm}
This section describes the proposed mining algorithm (RNP-Miner) for discovering nonoverlapping frequent patterns in a sequence database. Fig. \ref{framework} shows the framework of RNP-Miner.


\begin{figure}[!ht]
	\centering
	\includegraphics[width=0.7\linewidth]{"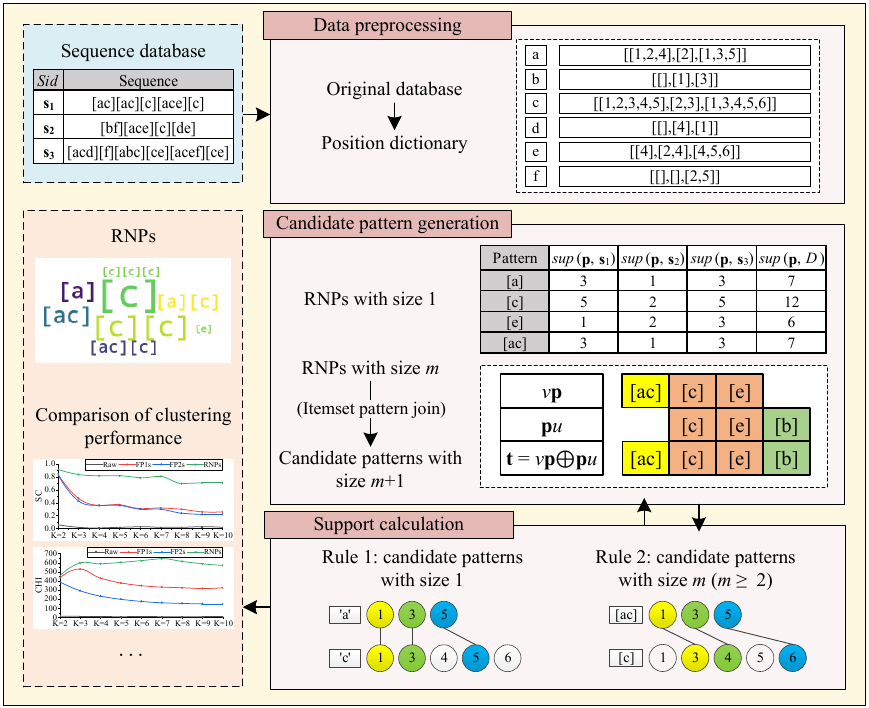"}
	\caption{Framework of RNP-Miner. RNP-Miner contains three parts: data preprocessing, candidate pattern generation, and support calculation. Data preprocessing aims to convert sequence databases into a position dictionary to reduce repeated traversal of the original database during the mining process. The candidate pattern generation stage adopts an itemset pattern join strategy to reduce the number of candidate patterns by connecting frequent patterns of size \textit{m} to generate candidate patterns of size \textit{m}+1. In the support calculation stage, efficient calculation methods are used to quickly calculate the support of patterns based on the occurrence position of sub-patterns.}
	\label{framework}
\end{figure}

\subsection{Data preprocessing}
\label{subsection:Data preprocessing}
To discover all RNPs, RNP-Miner has to scan the sequence database to calculate the supports of the candidate patterns. Obviously, it is inefficient to scan the original sequence database. To tackle this issue, we propose a position dictionary structure to improve the efficiency of the algorithm.

The position dictionary structure has two parts: keys and values. The key stores the item, and the value is a nested list with size \textit{d}, i.e., $ [[list_{1}] $, $ [list_{2}] $,$\cdots[list_{d}]] $, where \textit{d} is the number of sequences. If $ [list_{c}] $ is Null, then this means that the item does not occur in sequence $ \textbf{s}_{c} $; Otherwise, the positions of the item in sequence $ \textbf{s}_{c} $ are stored.

\begin{example}\label{e-positiondict}
	Assuming we have sequence database \textit{D} shown in Table \ref{t-database}. We take item ``d" as an example. [\textit{$ list_{\textit{1}} $}] is Null, since item ``d" does not occur in $ \textbf{s}_{1} $ = [$ac$][$ac$][$c$][$ace$][$c$]. [\textit{$ list_{\textit{2}} $}] is [4], since item ``d" occurs in the fourth itemset of $ \textbf{s}_{2} $ = [$bf$][$ace$][$c$][$de$]. Similarly, the value of item ``d" is [[], [4], [1]]. The position dictionary of sequence database \textit{D} is shown in Fig. \ref{positiondict}.
	
	\begin{figure}[ht]
		\centering
		\includegraphics[width=0.35\linewidth]{"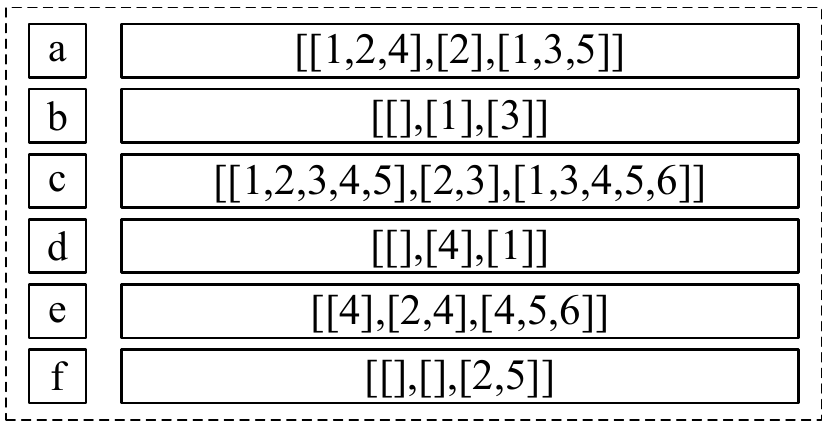"}
		\caption{Position dictionary of sequence database \textit{D}}
		\label{positiondict}
	\end{figure}

\end{example}

The pseudocode of the data processing method is given in Algorithm \ref{alg-DataPro}.

\begin{algorithm}[htb]	
	\caption{DataPro: Process the sequence database as a position dictionary.}
        \hspace*{0.02in} \leftline{{\bf Input:}
			Sequence database \textit{D}}
	\hspace*{0.02in} \leftline{{\bf Output:}
			Position dictionary \textit{S}}
	\begin{algorithmic}[1]
            \State \textit{S} $\leftarrow $\{\};
		\For {each sequence $ \textbf{s}_{c} $ in \textit{D}} 
		\For {each itemset $ s_{j} $ in $ \textbf{s}_{c} $}
		\For {each \textit{item} in $ s_{j} $}
            \If{\textit{item} not in \textit{S}.\textit{Key}}
            \State Initial \textit{S}[\textit{item}];
            \EndIf
		\State $S$[\textit{item}].list[$c$].add(\textit{j});
		\EndFor
		\EndFor
		\EndFor
		\State \Return \textit{S};
	\end{algorithmic}
    \label{alg-DataPro}     
\end{algorithm}

DataPro first initializes position dictionary \textit{S} as an empty dictionary (line 1). Next, it scans sequence database \textit{D} and reads each \textit{item} (lines 2 to 4). If \textit{item} is not in the Key of \textit{S}, then DataPro initializes \textit{S}[\textit{item}] as an empty two-dimensional array (lines 5 to 7). After that, the algorithm stores the sequence and itemset position of \textit{item} in \textit{S} (line 8). Finally, DataPro returns  position dictionary \textit{S} (line 12).

\subsection{Candidate pattern generation}
\label{subsection:Candidate pattern generation}
Many SPM methods employ the I-extension and S-extension to generate the candidate patterns \cite{47RUP-Miner}, which can be seen as a kind of enumeration tree strategy. The enumeration strategy can generate all feasible candidate patterns. However, this method will generate many redundant candidate patterns. To improve the efficiency of this method, we propose an itemset pattern join strategy based on the Apriori property. 

The definitions of the I-extension and S-extension are shown as follows.

\begin{definition}\label{definition5}
	\rm (I-Extension and S-Extension \cite{47RUP-Miner}) Assuming we have a pattern \textbf{p} and an item \textit{e}. The I-extension appends \textit{e} to the last itemset of \textbf{p}, and this is denoted by \textbf{p}$\odot$\textit{e}. The S-extension appends \textit{e} to a new itemset after the last itemset of \textbf{p}, and this is denoted by \textbf{p}$\otimes$\textit{e}.
\end{definition}

\begin{example}\label{e-ISConnection}
Assuming we have a pattern \textbf{p} = [$ac$][$c$] and an item [$e$]. According to the I-extension and S-extension, \textbf{p}$\odot$[$e$] = [$ac$][$ce$] and \textbf{p}$\otimes$[$e$] = [$ac$][$c$][$e$], respectively. It should be noted that items in an itemset are ordered, so that \textbf{p} is denoted as [$ac$][$c$] rather than [$ca$][$c$]. Hence, in the process of the I-extension, only items after the last item of this pattern can be extended. For example, assuming the last item of \textbf{p} is ``$c$''. Then, only items ``$d$'', ``$e$'', and ``$f$'' can be extended with \textbf{p} by the I-extension.
\end{example}

Obviously, the enumeration strategy will generate an exponential number of candidate patterns. To tackle this issue, we propose two pruning strategies based on the anti-monotonicity property. 

\begin{theorem}\label {theorem_anti-monotonicity}
The support satisfies the anti-monotonicity property. i.e. \textit{sup}(\textbf{q}, \textit{D})$\geq$\textit{sup}(\textbf{p}, \textit{D}), where pattern \textbf{q} is a sub-pattern of pattern \textbf{p}.
\end{theorem}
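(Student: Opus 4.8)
The plan is to prove the stronger per-sequence statement $sup(\textbf{q},\textbf{s})\ge sup(\textbf{p},\textbf{s})$ for every sequence $\textbf{s}$, and then sum over $D$ using Definition \ref{definition3}. The whole argument rests on one observation: every occurrence of $\textbf{p}$ ``contains'' an occurrence of $\textbf{q}$, and this containment is compatible with the nonoverlapping relation, so any nonoverlapping family for $\textbf{p}$ projects to a nonoverlapping family for $\textbf{q}$ of the same size.

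First I would fix what it means for $\textbf{q}=q_1q_2\cdots q_{m'}$ to be a sub-pattern of $\textbf{p}=p_1p_2\cdots p_m$: there is a strictly increasing map $\phi:\{1,\dots,m'\}\to\{1,\dots,m\}$ with $q_k\subseteq p_{\phi(k)}$ for all $k$. (It would in fact suffice to handle the two elementary cases, deleting one item from an itemset and deleting one whole itemset, and then compose; but the direct argument below covers the general case at once.) Then, given a sequence $\textbf{s}=s_1\cdots s_n$ and any occurrence $l=\langle l_1,\dots,l_m\rangle$ of $\textbf{p}$ in $\textbf{s}$, define its projection $\pi(l)=\langle l_{\phi(1)},\dots,l_{\phi(m')}\rangle$. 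Since $\phi$ is strictly increasing and $l_1<\cdots<l_m$, the entries of $\pi(l)$ are strictly increasing, and since $q_k\subseteq p_{\phi(k)}\subseteq s_{l_{\phi(k)}}$, Definition \ref{definition2} shows $\pi(l)$ is an occurrence of $\textbf{q}$ in $\textbf{s}$.

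The key step is that $\pi$ sends pairwise nonoverlapping occurrences of $\textbf{p}$ to pairwise nonoverlapping occurrences of $\textbf{q}$. Indeed, if $l$ and $l'$ are nonoverlapping occurrences of $\textbf{p}$, then by Definition \ref{definition2} we have $l_q\neq l'_q$ for every $q\in\{1,\dots,m\}$, hence in particular $l_{\phi(k)}\neq l'_{\phi(k)}$ for every $k\in\{1,\dots,m'\}$, which is exactly the statement that $\pi(l)$ and $\pi(l')$ are nonoverlapping (and, a fortiori, $\pi(l)\neq\pi(l')$). Now take a maximum-size family $\{l^{(1)},\dots,l^{(r)}\}$ of pairwise nonoverlapping occurrences of $\textbf{p}$ in $\textbf{s}$, so $r=sup(\textbf{p},\textbf{s})$ by Definition \ref{definition3}. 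Its image $\{\pi(l^{(1)}),\dots,\pi(l^{(r)})\}$ is a family of $r$ pairwise nonoverlapping occurrences of $\textbf{q}$ in $\textbf{s}$; since $sup(\textbf{q},\textbf{s})$ is the \emph{maximum} size of such a family, $sup(\textbf{q},\textbf{s})\ge r=sup(\textbf{p},\textbf{s})$. Summing this inequality over $\textbf{s}_1,\dots,\textbf{s}_d$ and invoking $sup(\textbf{p},D)=\sum_{c=1}^{d}sup(\textbf{p},\textbf{s}_c)$ yields $sup(\textbf{q},D)\ge sup(\textbf{p},D)$.

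I do not expect a serious obstacle. The only points needing care are: (i) pinning down the definition of sub-pattern so that the index map $\phi$ exists; (ii) checking that projection respects the nonoverlapping condition at every coordinate, which is the crux and follows immediately because ``$l_q\neq l'_q$ for all $q$'' is preserved when one restricts attention to the coordinates in the image of $\phi$; and (iii) keeping the direction of the inequality straight, i.e. using that $sup(\textbf{q},\textbf{s})$ is a maximum and therefore dominates the cardinality of the particular nonoverlapping family constructed from $\textbf{p}$. The self-adaptive-gap symbol $\ast$ plays no role, since an occurrence in Definition \ref{definition2} only requires the matched positions to be strictly increasing.
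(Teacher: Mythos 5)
Your proof is correct, and its engine is the same as the paper's: restrict each nonoverlapping occurrence of $\textbf{p}$ to the coordinates corresponding to $\textbf{q}$, observe that the condition ``$l_q\neq l'_q$ for all $q$'' survives restriction to a subset of coordinates, and conclude that a maximum nonoverlapping family for $\textbf{p}$ yields an equally large nonoverlapping family for $\textbf{q}$. The differences are in scope and bookkeeping. The paper's proof takes ``sub-pattern'' to mean a \emph{contiguous} block of whole itemsets, $\textbf{q}=p_a p_{a+1}\cdots p_b$, and argues directly at the database level, listing $k$ occurrences without separating them by sequence; your version works per sequence and then sums via Definition \ref{definition3}, which is the cleaner reading of the database-level support. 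More substantively, your containment-based definition of sub-pattern (a strictly increasing index map $\phi$ with $q_k\subseteq p_{\phi(k)}$) strictly generalizes the paper's: it also covers the case where $\textbf{q}$ drops individual \emph{items} from an itemset of $\textbf{p}$, e.g.\ $[b]$ as a sub-pattern of $[ab]$, or $[ae]$ as a sub-pattern of an I-extension $[aef]$. That extra generality is not cosmetic -- the PUI strategy and the I-extension half of PUP are justified only by the item-level version of anti-monotonicity, which the paper's contiguous-block proof does not literally establish. So your argument buys the full strength of the theorem as it is actually used, at the cost of having to pin down the sub-pattern relation explicitly; the paper's proof is shorter but covers only the S-extension/prefix-suffix instances needed for Theorem \ref{theorem1}.
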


\begin{proof}
Assuming pattern $ \textbf{p}=\textit{p}_1 \textit{p}_2 \cdots \textit{p}_{\textit{m}} $. Since pattern \textbf{q} is a sub-pattern of pattern \textbf{p}, \textbf{q} can be written as $ \textit{p}_{\textit{a}} \textit{p}_{\textit{a}+1} \cdots \textit{p}_{\textit{b}} $, where \textit{a}$\geq$1 and \textit{m}$\geq$\textit{b}$\geq$\textit{a}. Suppose \textit{sup}(\textbf{p}, \textit{D})=\textit{k}, which means that there are \textit{k} nonoverlapping occurrences for \textbf{p} in database \textit{D}, i.e. $<\textit{l}_{1,1}, \textit{l}_{1,2},\cdots, \textit{l}_{1,\textit{m}}>$, $<\textit{l}_{2,1}, \textit{l}_{2,2},\cdots, \textit{l}_{2,\textit{m}}>$, $\cdots$, $<\textit{l}_{\textit{k},1}, \textit{l}_{\textit{k},2},\cdots, \textit{l}_{\textit{k},\textit{m}}>$. Therefore, the \textit{k} occurrences $<\textit{l}_{1,\textit{a}}, \textit{l}_{1,\textit{a}+1},\cdots, \textit{l}_{1,\textit{b}}>$, $<\textit{l}_{2,\textit{a}}, \textit{l}_{2,\textit{a}+1},\cdots, \textit{l}_{2,\textit{b}}>$, $\cdots$, $<\textit{l}_{\textit{k},\textit{a}}, \textit{l}_{\textit{k},\textit{a}+1},\cdots, \textit{l}_k,\textit{b}>$ are nonoverlapping. Thus, there are at least \textit{k} nonoverlapping occurrences for \textbf{q} in database \textit{D}. Hence, \textit{sup}(\textbf{q}, \textit{D})$\geq$\textit{sup}(\textbf{p}, \textit{D}).
\end {proof}

\textbf{Pruning Strategy 1 (Pruning of the unpromising items (PUI) strategy):} Given an item \textit{i}, if $ sup(i, D) < minsup $, then item \textit{i} can be pruned, and it cannot be extended to other candidate patterns.


\textbf{Pruning Strategy 2 (Pruning of the unpromising patterns (PUP) strategy):} Given a pattern \textbf{p}, if $ sup(\textbf{p}, D) < minsup $, then \textbf{p} and its super-patterns can be pruned. 

\begin{example}\label{e-pruning}
	In this example, we use sequence database \textit{D}, which is shown in Table \ref{t-database}, and \textit{minsup} = 6 to illustrate the principles of the PUI and PUP strategies. Fig. \ref{prunning} shows the candidate patterns generated by the enumeration strategy.
	
	\begin{figure}[ht]
		\centering
		\includegraphics[width=0.7\linewidth]{"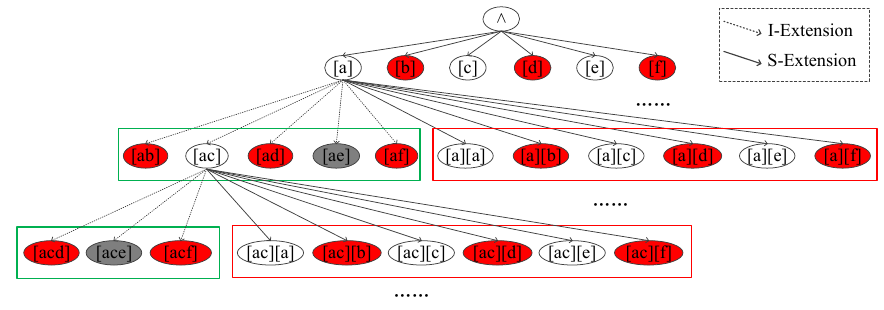"}
		\caption{The candidate pattern tree generated by the enumeration strategy. The red nodes represent the patterns that are pruned by the PUI strategy, and the grey nodes represent the patterns that are pruned by the PUP strategy.}
		\label{prunning}
	\end{figure}
	
	First, we calculate the support of each item: sup(``$a$'', $D$) = 7, sup(``$b$'', $D$) = 2, sup(``$c$'', $D$) = 12, sup(``$d$'', $D$) = 2, sup(``$e$'', $D$) = 6, and sup(``$f$'', $D$) = 3. According to the PUI strategy, ``$b$'', ``$d$'', and ``$f$'' are pruned and are not extended to create more candidate patterns, since their support values are lower than $minsup$. For example, pattern [$ab$] can be pruned, since it contains item ``$b$''. 	Then, we calculate the support of each candidate pattern generated by pattern [$a$]. According to the PUP strategy, [$ae$] is pruned, since it is not an RNP. More importantly, all its super-patterns are not RNPs.
	
\end{example}

As shown in Fig. \ref{prunning}, the PUI and PUP strategies can prune many unpromising candidate patterns. However, this candidate pattern generation method will also generate many redundant candidate patterns. To further prune the candidate patterns, we adopt an itemset pattern join strategy based on the Apriori property.

\begin{definition}\label{definition6}
	\rm (Prefix Pattern, Suffix Pattern, and Itemset Pattern Join) Assuming we have a pattern $ \textbf{p} = p_{1}p_{2}\cdots p_{m} $, and itemsets \textit{u} and \textit{v}. If $ \textbf{q} = \textbf{p}u = p_{1}p_{2}\cdots p_{m}u $, then \textbf{p} is called the prefix pattern of \textbf{q} and is denoted by \textit{Prefix}(\textbf{q}) = \textbf{p}. Similarly, if $ \textbf{d} = v\textbf{p} = vp_{1}p_{2}\cdots p_{m} $, then \textbf{p} is called the suffix pattern of \textbf{d} and is denoted by \textit{Suffix}(\textbf{d}) = \textbf{p}. We obtain a new pattern \textbf{t} of size \textit{m}+2 that is denoted by \textbf{t} = \textbf{d}$\oplus$\textbf{q} = \textit{v}\textbf{p}\textit{u}, and this process is called itemset pattern join.
\end{definition}

For example, given patterns $ \textbf{p}_{1} $ = [$ac$][$c$] and $ \textbf{p}_{2} $ = [$c$][$e$], we know that \textit{Prefix}$ (\textbf{p}_{2}) $ = [$c$] and \textit{Suffix}$ (\textbf{p}_{1}) $ = [$c$]. Thus, pattern \textbf{p} = $ \textbf{p}_{1} \oplus \textbf{p}_{2} $ = [$ac$][$c$][$e$] can be generated by the itemset pattern join strategy, since \textit{Prefix}$ (\textbf{p}_{2}) $ = \textit{Suffix}$ (\textbf{p}_{1})$.

\begin{theorem}\label {theorem1}
 Assuming $FP_{m}$ is the RNPs set with size \textit{m}. The candidate pattern set $ C_{m+1} $ can be generated from $FP_m$ using the itemset pattern join strategy. We can say that $ FP_{m+1}$ $ \subseteq C_{m+1} $, i.e., the itemset pattern join strategy is complete.
\end{theorem}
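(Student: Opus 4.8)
The plan is to prove the inclusion $FP_{m+1}\subseteq C_{m+1}$ directly, following the standard Apriori-style candidate-generation argument but instantiated for itemset sequences and the nonoverlapping support measure: for an arbitrary RNP $\textbf{p}$ of size $m+1$, I will single out two of its contiguous size-$m$ sub-patterns which the anti-monotonicity property (Theorem \ref{theorem_anti-monotonicity}) forces into $FP_m$, and then verify that the itemset pattern join of that pair reproduces $\textbf{p}$ exactly. Since every member of $FP_{m+1}$ is then a join of two members of $FP_m$, it belongs to $C_{m+1}$.

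Concretely, I would fix $\textbf{p}\in FP_{m+1}$ and write $\textbf{p}=p_1p_2\cdots p_{m+1}$. From it, form the prefix $\textbf{d}=p_1p_2\cdots p_m$ (delete the last itemset) and the suffix $\textbf{q}=p_2p_3\cdots p_{m+1}$ (delete the first itemset), both of size $m$. Each of $\textbf{d}$ and $\textbf{q}$ is a contiguous sub-pattern of $\textbf{p}$ in precisely the sense used in the proof of Theorem \ref{theorem_anti-monotonicity} (take $a=1$, $b=m$ for $\textbf{d}$ and $a=2$, $b=m+1$ for $\textbf{q}$), so that theorem yields $sup(\textbf{d},D)\ge sup(\textbf{p},D)\ge minsup$ and $sup(\textbf{q},D)\ge sup(\textbf{p},D)\ge minsup$; hence $\textbf{d},\textbf{q}\in FP_m$. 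I would then check joinability: by construction $\textit{Suffix}(\textbf{d})=p_2p_3\cdots p_m=\textit{Prefix}(\textbf{q})$, so Definition \ref{definition6} applies with common part $p_2\cdots p_m$, $v=p_1$, and $u=p_{m+1}$, giving $\textbf{d}\oplus\textbf{q}=p_1(p_2\cdots p_m)p_{m+1}=\textbf{p}$. Thus $\textbf{p}\in C_{m+1}$, and since $\textbf{p}$ was arbitrary, $FP_{m+1}\subseteq C_{m+1}$. (This says only that no frequent pattern is lost; the non-frequent members of $C_{m+1}$ are removed afterwards by the PUP strategy, so the two facts together recover exactly $FP_{m+1}$.)

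I do not expect a real obstacle, but two bookkeeping points deserve explicit care. When $m=1$ the common part $p_2\cdots p_m$ is empty, so I need the convention that the prefix and suffix of a size-$1$ pattern are the empty pattern and that two empty patterns count as equal; this makes every pair in $FP_1\times FP_1$ joinable, which is exactly what is needed so that no size-$2$ pattern is missed. Also, ``size'' counts itemsets, so the join only ever prepends or appends an entire itemset and never splits one; constructing a multi-item itemset is the role of the I-extension and takes place while $FP_1$ (and, more generally, the last itemset of a pattern) is being built, which lies outside this completeness claim and does not threaten it, because the only sub-patterns invoked here are obtained by deleting whole itemsets from the two ends of $\textbf{p}$ --- exactly the situation covered by Theorem \ref{theorem_anti-monotonicity}. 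The single step that genuinely must be spelled out is that the prefix/suffix decomposition reconstructs $\textbf{p}$ uniquely, and this is immediate from the explicit form written above.
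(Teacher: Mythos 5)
Your proof is correct and follows essentially the same route as the paper's: both arguments take the size-$(m+1)$ RNP, pass to its prefix and suffix size-$m$ sub-patterns (frequent by anti-monotonicity), and observe that the itemset pattern join of these two reconstructs the original pattern, so it must appear in $C_{m+1}$. The paper merely phrases this as a contradiction and leaves the joinability check, the appeal to Theorem~\ref{theorem_anti-monotonicity}, and the $m=1$ edge case implicit, all of which you spell out explicitly.
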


\begin{proof}
The proof is by contradiction. Assuming pattern $ \textbf{p}_{m_{ij}} $ is an RNP with size m+1, but $ \textbf{p}_{m_{ij}} $ is not in the set $ C_{m+1} $. The prefix and suffix patterns of pattern $ \textbf{p}_{m_{ij}} $ are $ \textbf{p}_{m_{j}} $ and $ \textbf{p}_{m_{i}} $, respectively, and these patterns are also RNPs. According to Definition \ref{definition6}, $ \textbf{p}_{m_{ij}} $ can be generated by the itemset pattern join strategy with $ \textbf{p}_{m_{i}} $ and $ \textbf{p}_{m_{j}} $. Thus, $ \textbf{p}_{m_{ij}} $ is in the set $ C_{m+1} $, which contradicts the hypothesis.
\end {proof}

\begin{example}\label{e-patternjoin}
	We use sequence database $D$ in Table  \ref{t-database} and set $minsup$ = 6. 38 candidate patterns are generated using the enumeration tree with the PUI and PUP strategies. However, only 29 candidate patterns are generated using the itemset pattern join strategy based on the PUI and PUP strategies. Hence, the itemset pattern join strategy based on the PUI and PUP strategies is more efficient.
\end{example}

\subsection{Support calculation}
\label{subsection:Support calculation}
To avoid multiple scans of the database and redundant calculations, we propose a candidate support calculation (CSC) algorithm, which calculates the supports of super-patterns based on the occurrence positions of sub-patterns. The related definitions and the principles of CSC are given as follows.

\begin{definition}\label{definition7}
	\rm (Occurrence Position) Assuming we have a sequence \textbf{s} = $ s_{1}s_{2}\cdots s_{j} $ and a pattern \textbf{p} = $ p_{1}p_{2}\cdots p_{m-1}p_{m} $, and there are \textit{k} nonoverlapping occurrences of pattern \textbf{p} in sequence \textbf{s}: $ \{<l_1^1 $, $ l_2^1 $, $\cdots$, $ l_{m-1}^1 $, $ l_m^1> $, $ <l_1^2 $, $ l_2^2 $, $\cdots$, $ l_{m-1}^2 $, $ l_m^2> $, $\cdots$, $ <l_1^k $, $ l_2^k $, $\cdots$, $ l_{m-1}^k $, $ l_m^k>\} $. Thus, the occurrence positions of pattern \textbf{p} in sequence \textbf{s} can be represented by \textit{Po}(\textbf{p}, \textbf{s}) = [$ l_m^1 $, $ l_m^2 $, $\cdots$, $ l_m^k $]. Assuming that the given sequence database \textit{D} contains \textit{d} sequences, the occurrence position of pattern \textbf{p} in sequence database \textit{D} can be recorded as a set of occurrence positions in each sequence, denoted as \textit{Po}(\textbf{p}, \textit{D}) = { \textit{Po}(\textbf{p}, $\textbf{s}_1$), \textit{Po}(\textbf{p}, $\textbf{s}_2$), $\cdots$, \textit{Po}(\textbf{p}, $\textbf{s}_d)$}.
\end{definition}

\begin{example}\label{e-po}
    Assuming we have a sequence database $D$ and a pattern \textbf{p} = [$ac$][$c$] shown in Table \ref{t-database}. It is easy to see that the nonoverlapping occurrences of pattern \textbf{p} = [$ac$][$c$] in $ \textbf{s}_{1} $ are \{$ < $1,2$ > $, $ < $2,3$ > $, $ < $4,5$ > $\}. Thus, the occurrence positions of \textbf{p} in $\textbf{s}_1$ can be represented by \textit{Po}(\textbf{p}, $\textbf{s}_1$) =[2,3,5]. Similarly, the nonoverlapping occurrences of \textbf{p} in $ \textbf{s}_{2} $ and $ \textbf{s}_{3} $ are \{$ < $2,3$ > $\} and \{$ < $1,3$ > $, $ < $3,4$ > $, $ < $5,6$ > $\}, respectively. Hence, Po(\textbf{p}, D) = \{[2,3,5], [3], [3,4,6]\}.
\end{example}

To effectively calculate the support of the pattern, we propose two rules, one for patterns with a size of 1 and one for patterns with size \textit{m} (\textit{m} $\geq$ 2). 

\textbf{Rule 1:} Assuming size(\textbf{p}) = 1. In this case, the occurrence positions of \textbf{p} are the common elements (they can be seen as intersections in sets) of the positions of each item in \textbf{p}, and the support of \textbf{p} is the number of occurrence positions. 

\textbf{Rule 2:} Assuming size(\textbf{p}) = \textit{m} (\textit{m} $\geq$ 2), pattern \textbf{q} is the prefix pattern of pattern \textbf{p} (\textbf{q} = \textit{Prefix}(\textbf{p})), and \textit{e} is the last itemset of pattern \textbf{p}. Moreover, assuming  [$ i_{1}, i_{2}, \cdots, i_{k}, \cdots, i_{a} $] and [$ j_{1}, j_{2}, \cdots, j_{l}, \cdots, j_{b} $] are the occurrence positions of pattern \textbf{q} and itemset \textit{e}, respectively. We use [$ i_{1}, i_{2}, \cdots, i_{k}, \cdots, i_{a} $] and [$ j_{1}, j_{2}, \cdots, j_{l}, \cdots, j_{b} $] to calculate the support of pattern \textbf{p}. If $ j_{l} $ is the first unused element for which $ j_{l} > i_{k} $, then $ j_{l} $ is the matching position of $ i_{k} $ for pattern \textbf{p}. For each $ i_{k} $ in [$ i_{1}, i_{2}, \cdots, i_{k}, \cdots, i_{a} $], we find its matching position in [$ j_{1}, j_{2}, \cdots, j_{l}, \cdots, j_{b} $]. The results are the occurrence positions of pattern \textbf{p}, and the support of \textbf{p} is the number of occurrence positions.

\begin{example}\label{e-csc}
	We use sequence database $D$ in Table \ref{t-database}, and patterns \textbf{t} = [$ac$] and \textbf{p} = [$ac$][$c$] to illustrate the principles of Rules 1 and 2, respectively. 
	
	We calculate the occurrence positions of \textbf{t} according to Rule 1, since size(\textbf{t}) = 1. We take $ s_{3} $ as an example. The occurrence positions of items ``a" and ``c" are [1,3,5] and [1,3,4,5,6], respectively. Thus, the common elements are [1,3,5], as shown in Fig. \ref{rule1}. Hence, the occurrence positions of t in $ s_{3} $ are [1,3,5], and sup(\textbf{t}, $ s_{3} $) = 3. Similarly, the occurrence positions of \textbf{t} in database D are \{[1,2,4][2][1,3,5]\}, and sup(\textbf{t}, D) = 7.
	
		\begin{figure}[ht]
		\centering
		\includegraphics[width=0.25\linewidth]{"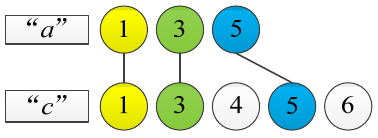"}
		\caption{The matching process of pattern \textbf{t} in $ s_{3} $. The nodes in the first level represent the positions of item ``$a$''. The nodes in the second level represent the positions of item ``$c$''. Nodes with the same color can be matched according to Rule 1.}
		\label{rule1}
	\end{figure}
	
	We calculate the occurrence positions of \textbf{p} according to Rule 2, since size(\textbf{p}) = 2. We also take $ s_{3} $ as an example. The occurrence positions of prefix pattern [$ac$] and the last itemset [$c$] are [1,3,5] and [1,3,4,5,6], respectively. Obviously, 3 in [1,3,4,5,6] is the matching position of 1 in [1,3,5], since 3 is the first unused element that is greater than 1. Hence, the occurrence positions of pattern \textbf{p} are [3,4,6], and sup(\textbf{p}, $s_{3}$) = 3, as shown in Fig. \ref{rule2}. Similarly, we can find that the occurrence positions of \textbf{p} in database D are \{[2,3,5][3][3,4,6]\}, and sup(\textbf{p}, D) = 7.
	
	\begin{figure}[ht]
		\centering
		\includegraphics[width=0.25\linewidth]{"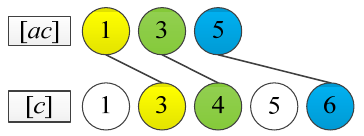"}
		\caption{The matching process of pattern \textbf{p} in $ s_{3} $. The nodes in the first level are the occurrence positions of pattern [$ac$]. The nodes in the second level are the occurrence positions of itemset [$c$]. Nodes with the same color can be matched according to Rule 2.}
		\label{rule2}
	\end{figure}
	
\end{example}

Now, Theorems \ref{Rule1-Complete} and \ref{Rule2-Complete} will prove the completeness of the CSC algorithm when size 1 and size $m$ ($m\geq$2), respectively.

\begin{theorem}\label {Rule1-Complete}
According to Rule 1, the support calculation of patterns with size 1 is complete.
\end{theorem}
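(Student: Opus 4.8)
The plan is to unwind the definitions and exploit the fact that for a pattern of size one every occurrence is a single position, so the nonoverlapping condition of Definition \ref{definition2} degenerates to ``the positions are distinct.'' Write $\textbf{p} = p_1$ with $p_1 = \{e_1, e_2, \dots, e_r\} \subseteq I$ and fix a sequence $\textbf{s} = s_1 s_2 \cdots s_n$ of $D$. By Definition \ref{definition2} an occurrence of $\textbf{p}$ in $\textbf{s}$ is a singleton $<j>$ with $p_1 \subseteq s_j$, and two occurrences $<j>$ and $<j'>$ are nonoverlapping iff $j \neq j'$. Let $L(e) = \{ j : e \in s_j \}$ denote the position list that the position dictionary (Algorithm \ref{alg-DataPro}) stores for item $e$ restricted to $\textbf{s}$.

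First I would establish soundness: every position returned by Rule 1 is a genuine occurrence position. If $j \in \bigcap_{t=1}^{r} L(e_t)$, then $e_t \in s_j$ for each $t$, hence $p_1 \subseteq s_j$, so $<j>$ is an occurrence of $\textbf{p}$ in $\textbf{s}$. Conversely, for completeness I would show that no occurrence position is missed: if $<j>$ is any occurrence, then $p_1 \subseteq s_j$, so $e_t \in s_j$, i.e.\ $j \in L(e_t)$, for every $t$, whence $j \in \bigcap_{t=1}^{r} L(e_t)$. The two inclusions give that the set of positions computed by Rule 1 is exactly $P := \{ j : p_1 \subseteq s_j \}$, the set of all occurrence positions of $\textbf{p}$ in $\textbf{s}$ in the sense of Definition \ref{definition7}.

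Next I would argue that $|P| = sup(\textbf{p}, \textbf{s})$. The elements of $P$ are pairwise distinct, so the family $\{ <j> : j \in P \}$ is a set of pairwise nonoverlapping occurrences, giving $sup(\textbf{p}, \textbf{s}) \geq |P|$. Conversely, any set of nonoverlapping occurrences of $\textbf{p}$ consists of singletons with pairwise distinct first coordinates, each lying in $P$, so it has at most $|P|$ elements; hence $sup(\textbf{p}, \textbf{s}) = |P|$, and the list returned by Rule 1 is a valid realization of $\textit{Po}(\textbf{p}, \textbf{s})$. Summing over all sequences $\textbf{s}_c$ of $D$ and applying Definition \ref{definition3} then yields the correct value of $sup(\textbf{p}, D)$.

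I expect no real obstacle here; the argument is essentially bookkeeping with the set-membership definition of $\subseteq$. The only step that warrants a careful sentence is the maximality claim $sup(\textbf{p}, \textbf{s}) = |P|$ rather than merely $\geq |P|$ — but for size-one patterns this is immediate, since ``nonoverlapping'' and ``using distinct positions'' coincide, so counting the distinct occurrence positions already attains the maximum.
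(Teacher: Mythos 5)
Your proof is correct and rests on the same central observation as the paper's own proof: an occurrence position of a size-one pattern is precisely a position lying in every item's position list, so the intersection computed by Rule 1 coincides with the set of occurrence positions. The paper proves only the ``no occurrence position is missed'' inclusion, and does so by contradiction; you additionally establish the converse inclusion (soundness) and the maximality claim $sup(\textbf{p},\textbf{s}) = |P|$, both of which the paper leaves implicit, so your argument is strictly more thorough but not a genuinely different route.
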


\begin{proof}
This proof is by contradiction. Assume that \textit{l} is an occurrence position of the pattern, while \textit{l} is not in the common elements of the positions of each item in the pattern. Since \textit{l} is an occurrence position of the pattern, each item in this pattern occurs at position \textit{l}, i.e. \textit{l} is included in the position of all items in this pattern. Thus, \textit{l} is in the common elements of the positions of each item in the pattern, which contradicts the hypothesis.
\end {proof}

\begin{theorem}\label {Rule2-Complete}
According to Rule 2,  the support calculation of patterns with size $m$ ($m\geq$2) is complete.
\end{theorem}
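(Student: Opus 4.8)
The plan is to prove, by induction on the size $m$ of the pattern, a statement slightly stronger than completeness, so that the induction actually closes. The strengthened claim is: for every pattern $\textbf{r}$ and sequence $\textbf{s}$, the list $L(\textbf{r})=[\ell_1<\ell_2<\cdots<\ell_k]$ returned by Rules 1--2 (i)~is \emph{realized}, i.e.\ there exist $k$ pairwise nonoverlapping occurrences of $\textbf{r}$ in $\textbf{s}$ whose last positions are exactly $\ell_1,\ldots,\ell_k$ (hence $k\le sup(\textbf{r},\textbf{s})$); and (ii)~is \emph{pointwise minimal}, i.e.\ for any family of $t$ pairwise nonoverlapping occurrences of $\textbf{r}$ with last positions $\lambda_1<\cdots<\lambda_t$ one has $t\le k$ and $\ell_r\le\lambda_r$ for all $r\le t$. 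Part~(i) together with part~(ii) applied with $t=sup(\textbf{r},\textbf{s})$ forces $k=sup(\textbf{r},\textbf{s})$, which is the completeness assertion of the theorem (and it simultaneously pins down which of the occurrence-position lists admissible in Definition~\ref{definition7} is the one actually computed). The base case $m=1$ is Theorem~\ref{Rule1-Complete} for part~(i), and for part~(ii) it is the trivial observation that a nonoverlapping size-$1$ family is just a set of distinct positions each containing the single itemset, so its $r$-th smallest element is at least the $r$-th smallest common element.

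For the inductive step, write $\textbf{q}=\textit{Prefix}(\textbf{p})$ (size $m-1$), let $e$ be the last itemset of $\textbf{p}$, and by the induction hypothesis let $L(\textbf{q})=[i_1<\cdots<i_a]$ satisfy (i) and (ii); let $[j_1<\cdots<j_b]$ be the positions with $e\subseteq s_{j_l}$. I would first dispatch the easy direction: each pair $(i_k,j_l)$ matched by Rule~2 yields an occurrence of $\textbf{p}$, obtained by appending position $j_l$ to the nonoverlapping occurrence of $\textbf{q}$ ending at $i_k$ guaranteed by~(i); since these $\textbf{q}$-occurrences are pairwise nonoverlapping and the matched $j_l$'s are distinct, the resulting occurrences of $\textbf{p}$ differ in every coordinate, hence are pairwise nonoverlapping, so $L(\textbf{p})$ as produced by Rule~2 is realized and (i) holds for $\textbf{p}$. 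For the hard direction I would take an arbitrary family $F$ of $t$ pairwise nonoverlapping occurrences of $\textbf{p}$; its last positions $\lambda_1<\cdots<\lambda_t$ lie in $\{j_l\}$, and, because the occurrences of $F$ differ in every coordinate, their $\textbf{q}$-prefixes form a family of $t$ pairwise nonoverlapping occurrences of $\textbf{q}$ with end positions $\mu_1<\cdots<\mu_t$, carrying a perfect matching $\pi$ between the $\mu$'s and the $\lambda$'s with $\mu_r<\lambda_{\pi(r)}$ (each prefix-end precedes its own last position). Applying (ii) to $\textbf{q}$ gives $t\le a$ and $i_r\le\mu_r$, so replacing $\mu_r$ by $i_r$ preserves $i_r<\lambda_{\pi(r)}$: we obtain a matching of size $t$ between $\{i_1,\ldots,i_a\}$ and $\{j_1,\ldots,j_b\}$ in the ``staircase'' bipartite graph in which $i_k\sim j_l$ iff $j_l>i_k$.

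It then remains to invoke the combinatorial core: in this staircase bipartite graph the greedy leftmost rule --- scan $i_1,i_2,\ldots$ in increasing order and assign to $i_k$ the smallest still-unused $j_l>i_k$, which is exactly Rule~2 --- produces a \emph{maximum} matching, and moreover, reading off the assigned $j$-values in the order of the $i$'s, this matching is pointwise $\le$ the sorted $j$-value list of \emph{any} matching of the graph. Granting this lemma, the size-$t$ matching built above satisfies $t\le|L(\textbf{p})|$ and $\lambda_r\ge\ell_r$, which is (ii) for $\textbf{p}$, completing the induction. I expect this lemma to be the main obstacle, as it is the one place a genuine exchange argument is required: given any matching, repeatedly re-routing a ``crossing'' pair $i<i'$, $j<j'$ with $i\sim j'$ and $i'\sim j$ into $i\sim j$, $i'\sim j'$ never decreases the matching size and never increases the $j$-value assigned to the earlier index, and iterating this until no crossing pair remains yields precisely the greedy assignment, which therefore dominates every matching. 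The two other points needing care are the use of the strengthened hypothesis (part~(ii) for $\textbf{q}$ is exactly what permits pulling the whole argument back onto the fixed lists $L(\textbf{q})$ and $[j_l]$ instead of an arbitrary nonoverlapping $\textbf{q}$-family) and the observation that it is genuine nonoverlap of the $\textbf{p}$-occurrences, not merely distinctness of their last positions, that forces their $\textbf{q}$-prefixes to be nonoverlapping.
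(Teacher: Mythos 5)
Your argument is correct in substance but takes a genuinely different route from the paper. The paper's proof is short and leans on an external result: it cites Wu et al.\ \cite{24nosep} for the fact that iteratively extracting the \emph{minimum occurrence} yields a maximum nonoverlapping set, and then verifies by a four-case contradiction that Rule 2 does select the minimum occurrence at each step. You instead give a self-contained induction on the pattern size with a strengthened invariant --- the computed position list is both \emph{realized} by an actual nonoverlapping family and \emph{pointwise minimal} among all such families --- and reduce the inductive step to the statement that the greedy leftmost rule produces a maximum, pointwise-dominating matching in the staircase bipartite graph $i_k\sim j_l \Leftrightarrow j_l>i_k$. Your approach buys rigor and self-containedness: it makes explicit exactly what must be carried through the induction (the paper's appeal to ``the minimum occurrence'' is informal about why local minimality of each extension implies global maximality of the count), and it correctly isolates the two delicate points, namely that nonoverlap of the $\textbf{p}$-occurrences (not mere distinctness of last positions) is what forces nonoverlap of their $\textbf{q}$-prefixes, and that part (ii) of the invariant is what lets you replace an arbitrary nonoverlapping $\textbf{q}$-family by the fixed computed list $L(\textbf{q})$. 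The price is length, plus one small overstatement in your exchange lemma: repeatedly uncrossing a matching yields a \emph{monotone} matching of the same size, but not ``precisely the greedy assignment'' (e.g.\ with $i=[1]$, $j=[2,3]$ the non-crossing matching $1\mapsto 3$ survives uncrossing yet is not greedy); you need one further easy induction showing that the greedy's $r$-th assigned value is at most the $r$-th assigned value of any monotone matching. That step is routine, so the gap is cosmetic rather than structural.
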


\begin{proof}
Wu et al. \cite{24nosep} have shown that a complete algorithm could iteratively find the minimum occurrence. Now, we will show that Rule 2 iteratively finds the minimum occurrence. Assume that the first path is $<l_{m-1}, l_m>$, where $l_{m-1}$ is the position of sub-pattern, $l_m$ is the position of the last itemset, and $<l_{m-1}', l_m'>$ is the minimum occurrence. If $l_{m-1} \neq l_{m-1}'$, there will be two cases.

Case 1: Assume that $l_{m-1} < l_{m-1}'$. Thus, $<l_{m-1}, l_m'>$ is the minimum occurrence. This contradicts the assumption that $<l_{m-1}', l_m'>$ is the minimum occurrence.

Case 2: Assume that $l_{m-1} > l_{m-1}'$. Thus, $l_{m-1}'$ is the first position of the sub-pattern. This contradicts the fact that Rule 2 selects $l_{m-1}$ as the first position.

If $l_{m-1}=l_{m-1}'$, we assume that $<l_{m-1}, l_m>$ and $<l_{m-1}, l_m'>$ are two occurrences, and $l_m \neq l_m'$, there will be two cases.

Case 1: Assume that $l_m < l_m'$. Thus, $<l_{m-1}, l_m>$ is the minimum occurrence. This contradicts the assumption that $<l_{m-1}, l_m'>$ is the minimum occurrence.

Case 2: Assume that $l_m > l_m'$. According to Rule 2, $l_m'$ is the first unused element that can be matched. This contradicts the assumption that the first path is $<l_{m-1}, l_m>$.

In summary, we can iteratively find the minimum occurrence. Hence, according to Rule 2, the support calculation of patterns with size $m$ ($m\geq$2) is complete.
\end {proof}

\begin{theorem}\label {Correct}
The CSC algorithm is correct, i.e. each occurrence of the pattern satisfies the nonoverlapping condition.
\end{theorem}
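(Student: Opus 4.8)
The plan is to argue by induction on the size $m$ of the pattern $\textbf{p}$, following the two-rule structure of CSC. Recall from Definition \ref{definition2} that two occurrences $l=\langle l_1,\dots,l_m\rangle$ and $l'=\langle l_1',\dots,l_m'\rangle$ are nonoverlapping exactly when $l_q\neq l_q'$ for every $q$; so the target is to show that any two distinct occurrences emitted by CSC differ in \emph{every} coordinate (and, as a by-product, that each emitted tuple is a genuine strictly increasing occurrence). I would strengthen the induction hypothesis to: for every pattern $\textbf{q}$ of size $m$ and every sequence $\textbf{s}$, the tuples underlying the list $\textit{Po}(\textbf{q},\textbf{s})$ computed by CSC are valid occurrences that are pairwise nonoverlapping, listed in order of increasing last coordinate.

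For the base case $m=1$, Rule 1 sets the occurrence positions to the common elements of the item-position lists, so each occurrence is a singleton $\langle l\rangle$ with a distinct $l$; any two such occurrences then differ in their only coordinate, and each is trivially a valid occurrence. For the inductive step $m\ge 2$, let $\textbf{q}=\textit{Prefix}(\textbf{p})$, let $e$ be the last itemset of $\textbf{p}$, and let $[i_1,\dots,i_a]=\textit{Po}(\textbf{q},\textbf{s})$ and $[j_1,\dots,j_b]$ be the positions of $e$. Rule 2 scans the $i_k$ in increasing order and, for each, appends the first \emph{unused} $j_l$ with $j_l>i_k$, then marks that $j_l$ used; this produces one $\textbf{p}$-occurrence per successfully matched $i_k$. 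Three observations close the step: (i) $j_l>i_k$ together with the induction hypothesis that the $\textbf{q}$-occurrence ending at $i_k$ is strictly increasing shows the extended tuple is itself a valid occurrence; (ii) distinct $\textbf{p}$-occurrences arise from distinct prefix occurrences (distinct $i_k$), hence by the induction hypothesis they already disagree in coordinates $1,\dots,m-1$; (iii) the ``used'' bookkeeping makes $k\mapsto j_l$ an injection, so distinct $\textbf{p}$-occurrences also disagree in coordinate $m$. Combining (ii) and (iii), any two distinct occurrences produced differ in every coordinate, i.e. they are nonoverlapping, and they are emitted with increasing $j_l$, which re-establishes the strengthened hypothesis at size $m$.

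The main obstacle is making the interface between the two levels airtight: one must verify that $\textit{Po}(\textbf{q},\textbf{s})$ genuinely records the last coordinates of pairwise-nonoverlapping $\textbf{q}$-occurrences — this is exactly where Theorems \ref{Rule1-Complete} and \ref{Rule2-Complete} are invoked — and that Rule 2 never silently reuses a coordinate. In particular one should check that the matched $j_l$ cannot collide with an \emph{internal} position of the prefix occurrence it extends, which follows because $j_l>i_k$ and $i_k$ is the largest coordinate of that prefix occurrence. Once the greedy ``first unused element larger than $i_k$'' rule is shown to induce an injection from matched prefix occurrences into the $j$-positions, the nonoverlapping property for $\textbf{p}$ is immediate, and the induction completes the proof.
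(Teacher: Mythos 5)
Your proof is correct and takes essentially the same route as the paper's: both arguments reduce the nonoverlapping check to the last coordinate by observing that the prefix occurrences are already pairwise nonoverlapping (so coordinates $1,\dots,m-1$ of two distinct extended occurrences automatically differ), and then use the ``first unused element'' rule to rule out a collision in coordinate $m$. Your version merely makes explicit the induction on pattern size and the injectivity of the matching that the paper's contradiction argument leaves implicit.
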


\begin{proof}
For a pattern with size 1, we do not consider the nonoverlapping condition. For a pattern with size $m$ ($m\geq$2), its occurrence positions are obtained by matching the occurrence positions of the prefix pattern with the last itemset according to Rule 2. Assume that there are two occurrences $<l_{m-1}, l_m>$ and $<l_{m-1}', l_m'>$, and these do not satisfy the nonoverlapping condition. We only need to consider whether $l_m$ and $l_m'$ satisfy the nonoverlapping condition, since $l_{m-1}$ and $l_{m-1}'$ are the positions where the prefix pattern occurs. Only when $l_m=l_m'$, $<l_{m-1}, l_m>$ and $<l_{m-1}', l_m'>$ do not satisfy the nonoverlapping condition, but this contradicts the fact that the unused position is selected in Rule 2. Hence, the CSC algorithm is correct.
\end {proof}

The pseudocode of  CSC is given in Algorithm \ref{alg-CSC}.

\begin{algorithm}[htb]	
	\caption{CSC: Calculate the support of the pattern based on the occurrence positions.}
        \hspace*{0.02in} \leftline{{\bf Input:}
			pattern \textbf{p}, position dictionary \textit{S}, and the occurrence positions of \textbf{p} \textit{Po}}
	\hspace*{0.02in} \leftline{{\bf Output:}
			The support of \textbf{p} \textit{sup}(\textbf{p}, \textit{D})}
	\begin{algorithmic}[1]
        \State \textit{seqNum} $\leftarrow$ the number of sequence in database \textit{D};
		\If {size(\textbf{p})==1}
		\State \textbf{q} $\leftarrow$ \textbf{p}[1:len(\textbf{p})-1]; \textit{i} $\leftarrow$ \textbf{p}[len(\textbf{p})];
        \For{\textit{c}=1 to \textit{seqNum}}
        \State \textit{Po}(\textbf{p}, \textit{D}).list[\textit{c}] $\leftarrow$ \textit{Po}(\textbf{q}, \textit{D}).list[\textit{c}] $\cap$ \textit{S} [\textit{i}].list[\textit{c}];
        \EndFor
		\Else
		\State \textbf{q} $\leftarrow$ \textit{Prefix}(\textbf{p}); \textbf{t} $\leftarrow$ the last itemset of \textbf{p};
        \For{\textit{c}=1 to \textit{seqNum}}
        \State flag $\leftarrow$ 0;
        \For{\textit{i}=1 to \textit{Po}(\textbf{q},\textit{D}).list[\textit{c}].len()}
        \For{\textit{j}=\textit{flag} to \textit{Po}(\textbf{t},\textit{D}).list[\textit{c}].len()}
        \If{\textit{Po}(\textbf{t},\textit{D}).list[\textit{c}][\textit{j}] $>$ \textit{Po}(\textbf{q},\textit{D}).list[\textit{c}][\textit{i}]}
        \State \textit{Po}(\textbf{p},\textit{D}).list[\textit{c}].add(\textit{j});
        \State \textit{flag} $\leftarrow$ \textit{j}; break;
        \EndIf
        \EndFor
        \EndFor
        \EndFor
		\EndIf
		\State \textit{sup}(\textbf{p}, \textit{D}) $\leftarrow$ \textit{Po}(\textbf{p}, \textit{D}).len();
		\State \Return \textit{sup}(\textbf{p}, \textit{D});
	\end{algorithmic}
	\label{alg-CSC}
\end{algorithm}

The CSC algorithm first obtains the number of sequences in \textit{D}, denoted by \textit{SeqNum} (line 1). Then, CSC determines whether the size of input pattern \textbf{p} is 1 (line 2). If yes, according to Rule 1, the algorithm calculates the occurrence positions of pattern \textbf{p} in each sequence and stores them  into \textit{Po}(\textbf{p}, \textit{D}) (lines 3 to 6). If not, according to Rule 2, the algorithm first obtains the prefix pattern \textbf{q} of \textbf{p} and the last itemset \textbf{t} (line 8), and initializes \textit{flag} to 0 to record the usage of the occurrence position of \textbf{t} (line 10). Next, CSC traverses each occurrence position of \textbf{q} and \textbf{t} (lines 11 to 12). If the current occurrence position of \textbf{t} is greater than that of \textbf{q}, then CSC records the occurrence position of \textbf{t} in the occurrence position of \textbf{p} and set \textit{flag} to \textit{j} (lines 13 to 15). Finally, \textit{sup}(\textbf{p}, \textit{D}) is the number of items in \textit{Po}(\textbf{p}, \textit{D}) (lines 21 to 22).

\subsection{RNP-Miner}
\label{subsection:RNP-Miner}
The RNP-Miner algorithm involves the following steps.

Step 1: Use the DataPro algorithm to convert sequence database \textit{D} into position dictionary \textit{S}. 

Step 2: Calculate the support of each pattern with a size of 1, and store the RNPs in $ FP_{1} $.

Step 3: Generate the candidate pattern set $C_{m}$ from $ FP_{m-1} $ using the itemset pattern join strategy.

Step 4: For each pattern \textbf{p} in set $C_{m}$, calculate its support using the CSC algorithm. If \textit{sup}(\textbf{p}, \textit{D}) $\geq$ \textit{minsup}, store \textbf{p} in $ FP_{m}$.

Step 5: Iterate steps 3 and 4 until $ C_{m+1} $ is empty. The RNPs are stored in sets $ FP_{1} $-$ FP_{m} $.

Example \ref{e-rnpMiner} illustrates the principle of RNP-Miner.

\begin{example}\label{e-rnpMiner}
	We use  sequence database $D$ in Table \ref{t-database} and set $minsup$ = 6.

	Step 1: Convert the database D into position dictionary S shown in Fig. \ref{positiondict}.
 
	Step 2: Store the RNPs with a size of 1 in $ FP_{1} $, which is \{[$a$], [$c$], [$e$], [$ac$]\}.
	
	Step 3: Generate candidate patterns $ C_{2} $ from $ FP_{1} $ using the itemset pattern join strategy. It is noted that the prefix pattern and suffix pattern of the pattern in $ FP_{1} $ are NULL. Hence, pattern [$a$] can be extended to [$a$][$a$], [$a$][$c$], [$a$][$e$], and [$a$][$ac$]. Similarly, the number of candidate patterns generated in this step is 4$\times$4 = 16, and these candidate patterns are stored in $ C_{2} $.
	
	Step 4: Calculate the supports of the patterns in $ C_{2}$ and store the RNPs. For example, the support of pattern [$a$][$c$] is 7, which is greater than \textit {minsup}. Hence, pattern [$a$][$c$] is an RNP. Similarly, we obtain the set of RNPs with a size of 2: \{[$a$][$c$],[$c$][$c$],[$ac$][$c$]\}.
	
	Step 5: Iterate steps 3 and 4, i.e., use $ FP_{2} $ to generate $ C_{3} $ by the itemset pattern join strategy at first. For example, Suffix([$a$][$c$]) = [$c$], and Prefix([$c$][$c$]) = [$c$]. Hence, pattern [$a$][$c$] can be joined with pattern [$c$][$c$] to generate candidate patterns [$a$][$c$][$c$]. Iterating these processes, eight RNPs are discovered: [$a$], [$c$], [$e$], [$ac$], [$a$][$c$], [$c$][$c$], [$ac$][$c$], and [$c$][$c$][$c$].
\end{example}

\begin{theorem}\label {RNP-Miner-complete}
RNP-Miner is complete.
\end{theorem}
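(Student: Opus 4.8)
The plan is to show that RNP-Miner discovers every RNP by combining two ingredients already established in the excerpt: the completeness of the itemset pattern join strategy (Theorem \ref{theorem1}) and the correctness and completeness of the support calculation (Theorems \ref{Rule1-Complete}, \ref{Rule2-Complete}, and \ref{Correct}). The argument is naturally a proof by induction on the pattern size $m$, mirroring the iterative structure of Steps 2--5 of the algorithm.

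First I would set up the induction. For the base case $m=1$, Step 2 enumerates every size-1 pattern (each candidate being a subset of $I$ formed from the items, restricted to those surviving the PUI screen) and computes its support via Rule 1; by Theorem \ref{Rule1-Complete} this support is computed correctly, so every size-1 RNP is placed in $FP_1$. One should remark here that the PUI pruning is safe: by the anti-monotonicity property (Theorem \ref{theorem_anti-monotonicity}), an item $i$ with $sup(i,D)<minsup$ cannot appear in any RNP, so discarding it loses no RNP. For the inductive step, assume $FP_{m-1}$ contains exactly the RNPs of size $m-1$. Step 3 forms $C_m$ from $FP_{m-1}$ by the itemset pattern join strategy; by Theorem \ref{theorem1} we have $FP_m \subseteq C_m$, i.e. every size-$m$ RNP is generated as a candidate. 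Step 4 then computes $sup(\textbf{p},D)$ for each $\textbf{p}\in C_m$ using CSC, which by Theorems \ref{Rule2-Complete} and \ref{Correct} returns the true support counting exactly the maximum number of nonoverlapping occurrences, and retains $\textbf{p}$ in $FP_m$ iff $sup(\textbf{p},D)\ge minsup$. Hence $FP_m$ is exactly the set of size-$m$ RNPs, closing the induction.

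Next I would argue termination and the ``nothing missed at the top end'' point. Since every RNP of size $m$ has, by the anti-monotonicity property, an RNP prefix of size $m-1$ (in fact a prefix that is a sub-pattern), once $FP_{m-1}=\varnothing$ there can be no RNP of size $m$ or larger, so no RNP is lost by stopping when $C_{m+1}$ is empty; and because a sequence database has finitely many sequences of finite length, pattern sizes are bounded, so the iteration of Steps 3--4 terminates. Combining this with the inductive claim, $\bigcup_m FP_m$ equals the set of all RNPs, which is exactly what Definition \ref{definition4} asks RNP-Miner to output.

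The main obstacle, and the place where care is needed, is the inductive step's reliance on Theorem \ref{theorem1}: that theorem guarantees $FP_m \subseteq C_m$ only if the prefix and suffix sub-patterns of size $m$ used in the join are themselves in $FP_{m-1}$ — which in turn requires that those sub-patterns are genuinely RNPs, i.e. that an RNP's length-$(m-1)$ prefix and suffix patterns are frequent. This is again a consequence of anti-monotonicity (Theorem \ref{theorem_anti-monotonicity}), since a prefix/suffix pattern is a sub-pattern, but it must be invoked explicitly so that the induction hypothesis applies; without it, the PUP pruning in Step 3 could in principle have discarded a needed size-$(m-1)$ building block. I would therefore make this the crux of the write-up: show that PUI and PUP only ever remove non-RNPs (by anti-monotonicity), so $FP_{m-1}$ retains all size-$(m-1)$ RNPs, and then the join completeness and CSC correctness do the rest.
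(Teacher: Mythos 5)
Your proposal is correct and follows essentially the same route as the paper's proof, which likewise combines the harmlessness of data preprocessing, the join-strategy completeness (Theorem \ref{theorem1}), and the support-calculation completeness (Theorems \ref{Rule1-Complete} and \ref{Rule2-Complete}). Your version merely makes explicit the induction on pattern size and the anti-monotonicity argument that the paper leaves implicit (it appears inside the proof of Theorem \ref{theorem1}, where the prefix and suffix of an RNP are asserted to be RNPs), which is a welcome tightening but not a different approach.
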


\begin{proof}
RNP-Miner has three parts: data preprocessing, candidate pattern generation, and support calculation. Data preprocessing only converts the original database into a position dictionary, and does not affect the completeness of RNP-Miner. Theorem \ref{theorem1} proves that candidate pattern generation is complete, and Theorems \ref{Rule1-Complete} and \ref{Rule2-Complete} prove that the calculation of support is complete. Hence, RNP-Miner is complete.
\end {proof}

\begin{theorem}\label {space-com}
The space complexity of RNP-Miner is $O(t\times N/ r)$, where $t$, $N$, and $r$ are the number of RNPs, the total length of database $D$, and the item set size of database $D$, respectively.
\end{theorem}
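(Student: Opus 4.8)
The plan is to bound separately the space needed for the two main data structures maintained by RNP-Miner: the position dictionary $S$ produced by \texttt{DataPro}, and the collection of occurrence-position lists $\textit{Po}(\textbf{p}, D)$ maintained during the mining. For the position dictionary, observe that the value associated with each item is a nested list whose entries across all $d$ sequences record exactly the itemset indices at which that item appears; summing over all items therefore gives a total of $N$ entries, since $N$ is the total length of $D$ (every item-occurrence in every itemset of every sequence contributes exactly one index). Hence $S$ occupies $O(N)$ space. The number of distinct keys is at most $r$ (the number of distinct items), which is absorbed into $O(N)$.

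Next I would analyze the space used by the occurrence-position lists. The key point is that, by Definition \ref{definition7}, $\textit{Po}(\textbf{p}, \textbf{s}_c)$ stores one entry per nonoverlapping occurrence of $\textbf{p}$ in $\textbf{s}_c$, and the number of such occurrences is $sup(\textbf{p}, \textbf{s}_c) \le sup(\textbf{p}, D) \le$ the number of itemsets of $\textbf{s}_c$ in which the last itemset of $\textbf{p}$ can be matched. More crudely, $\sum_{c=1}^{d} sup(\textbf{p}, \textbf{s}_c) \le \sum_{c=1}^d \operatorname{size}(\textbf{s}_c) \le N$ — actually a tighter bound is available: each occurrence position of $\textbf{p}$ in $\textbf{s}_c$ consumes at least one itemset containing at least one item, so the number of entries in $\textit{Po}(\textbf{p}, D)$ is at most $N/r$ on average when items are evenly distributed, giving $O(N/r)$ per pattern. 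Since at most $t$ patterns (the RNPs, together with the bounded-size candidate sets generated from them) have their occurrence-position lists stored simultaneously, the total is $O(t \times N / r)$. Adding the $O(N)$ for $S$ and noting $N = O(t \times N/r)$ when $t \ge r$ — or more simply that the position-dictionary term is dominated in the regime of interest — yields the claimed $O(t \times N/r)$.

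I would then assemble these pieces: the algorithm at any moment holds $S$, the current frequent-pattern set $FP_m$ and candidate set $C_m$, and their occurrence-position lists, so the peak space is the sum of the above terms, which collapses to $O(t \times N/r)$.

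The main obstacle is making the per-pattern bound $O(N/r)$ rigorous rather than heuristic: the factor $1/r$ comes from an averaging argument (dividing total length by the number of distinct items to estimate how often each item, and hence each pattern's final itemset, recurs), and in the worst case a single item could dominate a sequence, so the bound as stated should be read as holding under the implicit uniformity assumption the authors use elsewhere for complexity estimates. I would state this assumption explicitly and then the counting argument goes through cleanly; alternatively one can prove the weaker but unconditional bound $O(t \times N)$ and remark that the $1/r$ refinement reflects the typical case.
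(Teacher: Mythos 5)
Your proposal is correct and follows essentially the same route as the paper's proof: the paper likewise argues that the average number of occurrence positions per item is $O(N/r)$, that by Rules 1 and 2 each pattern's occurrence-position list therefore takes $O(N/r)$ space, and multiplies by the $t$ RNPs. Your additional accounting for the $O(N)$ position dictionary and your explicit caveat that the $N/r$ factor is an average-case rather than worst-case bound are refinements the paper leaves implicit, but they do not change the argument.
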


\begin{proof}
The average size of each item is $O(N/r)$, since there are $r$ items in sequence database $D$. According to Rules 1 and 2, the space complexity of the occurrence position of each pattern is also $O(N/r)$. Hence, the space complexity of RNP-Miner is $O(t$ $\times$$N/r$), since there are $t$ RNPs.
\end {proof}

\begin{theorem}\label {time-com}
The time complexity of RNP-Miner is $O(N+T \times N/r)$, where $T$ is the number of candidate patterns.
\end{theorem}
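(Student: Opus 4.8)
The plan is to decompose the running time of RNP-Miner into its three phases — data preprocessing, candidate pattern generation, and support calculation — and bound each separately, then add the dominant terms. The first phase (DataPro, Algorithm \ref{alg-DataPro}) traverses every item of the database exactly once, performing a constant-time dictionary insertion per item, so it costs $O(N)$. This contributes the additive $N$ term.

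For the support-calculation phase, the key observation from Rules 1 and 2 (and the space bound in Theorem \ref{space-com}) is that the occurrence-position list of any pattern in a single sequence has length at most the number of positions of its constituent items, which is on average $O(N/r)$. For a size-1 pattern, Rule 1 intersects at most $r$ item-position lists, each of average length $O(N/r)$, giving $O(N)$ per candidate, but more carefully the relevant work is proportional to the list lengths actually touched, which is $O(N/r)$ amortized per candidate. For a size-$m$ pattern, Rule 2 walks the prefix-occurrence list and the last-itemset-occurrence list once each, using the \textit{flag} pointer so that each element of the itemset list is examined a bounded number of times; hence the cost per candidate is linear in the lengths of the two lists, i.e. $O(N/r)$. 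Since there are $T$ candidate patterns in total, the support-calculation phase costs $O(T \times N/r)$.

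Finally I would argue that the candidate-generation phase (the itemset pattern join over $FP_m$ to produce $C_{m+1}$, Definition \ref{definition6}) does not dominate: each join compares a prefix pattern with a suffix pattern of bounded description length and produces one candidate, so its cost is absorbed into the $O(T \times N/r)$ term (generating $T$ candidates costs $O(T)$ plus the comparisons, which are cheaper than evaluating their supports). Adding the three phases gives $O(N) + O(T \times N/r) + O(T) = O(N + T \times N/r)$, as claimed.

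The main obstacle I expect is making the per-candidate bound for support calculation genuinely tight rather than pessimistic: a naive reading of the nested loops in Algorithm \ref{alg-CSC} (lines 11–12) suggests a product of the two list lengths, i.e. $O((N/r)^2)$ per candidate, which would spoil the stated bound. The crux is to show that the \textit{flag} variable makes the inner loop amortize correctly — each position of \textit{Po}(\textbf{t},\textit{D}) in a given sequence is scanned past at most once across all iterations of the outer loop — so the combined work per sequence is $O(|\textit{Po}(\textbf{q})| + |\textit{Po}(\textbf{t})|) = O(N/r)$ rather than the product. I would state this amortization argument explicitly as the heart of the proof.
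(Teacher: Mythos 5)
Your proposal is correct and follows essentially the same decomposition as the paper's proof: $O(N)$ for the single preprocessing scan, $O(N/r)$ per candidate for support calculation (hence $O(T\times N/r)$ over all $T$ candidates), and a candidate-generation cost that is argued to be dominated by the other terms (the paper quantifies it as $O(T\times(\log(f_m)+h))$ before absorbing it). Your explicit amortization argument for the \textit{flag} pointer in Algorithm \ref{alg-CSC} is in fact a more careful justification of the per-candidate $O(N/r)$ bound than the paper's one-line assertion, but it is the same underlying claim.
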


\begin{proof}
RNP-Miner has three parts: data preprocessing, candidate pattern generation, and support calculation. Firstly, to convert the sequence database into a position dictionary, we need to scan the sequence database once. Thus, the time complexity of the data preprocessing is $O(N)$. Then, we use ItemsetPatternJoin to generate candidate patterns. Assuming the size of $FP_m$ is $f_m$. Thus, the time complexity of lines 4 to 12 in Algorithm 4 is $O$(log($f_m$)), since it is a binary search. The time complexity of lines 14 to 19 in Algorithm 4 is $O(h)$, where $h$ is the average number of patterns that can be joined for each frequent pattern. Therefore, the time complexity of ItemsetPatternJoin is O($f_m \times$(log($f_m$)+h)). Since there are $T$ candidate patterns and $T\geq \sum f_m$ , the time complexity of candidate pattern generation is $O(T \times$(log($f_m$)+ $h$)). Finally, it is easy to know that the time complexity of CSC to calculate the occurrence positions of the super-pattern is $O(N/r)$, since each pattern has $N/r$ occurrence positions. For all candidate patterns, the time complexity of CSC is $O(T \times N/r)$. Hence, the time complexity of the RNP-Miner algorithm is $O(N+T \times$(log($f_m$)+ $h)+T$ $\times N/r$) = $O(N+T \times$ $N/r$) based on the fact that $f_m$ and $h$ are far less than $N$.
\end {proof}

RNP-Miner is shown in Algorithm \ref{alg-RNPMiner}.

\begin{algorithm}[htb]	
	\caption{RNP-Miner: Mine all RNPs.}
            \hspace*{0.02in} \leftline{{\bf Input:}
			sequence database \textit{D} and \textit{minsup}}
	   \hspace*{0.02in} \leftline{{\bf Output:}
			RNP set \textit{FP}}
	\begin{algorithmic}[1]
		\State \textit{S} $ \leftarrow $ DataPro(\textit{D});
		\State Calculate the support of each pattern with size 1, and store the RNPs with size 1 in $ FP_{1} $ and $ FP $ and the occurrence positions of RNPs in \textit{Po}.
		\State \textit{m} $\leftarrow$ 2;
		\While{$ FP_{m-1} $ $\neq$ NULL}
		\State $ C_{m} $ $\leftarrow$ ItemsetPatternJoin($ FP_{m-1} $);
		\For{each \textbf{p} in $ C_{m} $}
		\State \textit{sup} $\leftarrow$ CSC(\textbf{p}, \textit{Po});
		\If{\textit{sup} $\geq$ \textit{minsup}}
		\State $ FP_{m} \leftarrow FP_{m} \cup \textbf{p} $;
		\EndIf
		\EndFor
		\State $ FP \leftarrow FP \cup FP_{m} $;
		\State \textit{m} $\leftarrow$ \textit{m}+1;
		\EndWhile
		\State \Return \textit{FP};
	\end{algorithmic}
	\label{alg-RNPMiner}
\end{algorithm}

The RNP-Miner algorithm first calls the DataPro algorithm to preprocess sequence database \textit{D} to obtain the position dictionary \textit{S} (line 1). Then, the algorithm obtains the frequent patterns with size 1 and their occurrence positions (line 2). After that, RNP-Miner initializes \textit{m} to 2 (line 3), and if the frequent pattern set with size \textit{m}-1 is not empty, then the ItemsetPatternJoin algorithm (shown in Algorithm \ref{alg-join}) is used to generate candidate patterns with size \textit{m} (lines 4 to 5). For each candidate pattern \textbf{p}, RNP-Miner employs the CSC algorithm to calculate the support of pattern \textbf{p} (lines 6 to 7), and then checks condition \textit{sup}$\geq$\textit{minsup} and puts the patterns that meet to condition into a frequent pattern set with size \textit{m} (lines 8 to 10). After calculating all candidate patterns with size \textit{m}, the algorithm stores the frequent patterns with size \textit{m} into the frequent pattern set \textit{FP} and \textit{m} is incremented by one (lines 12 to 13). The algorithm repeats lines 5 to 13 until there are no frequent patterns with size \textit{m}. Finally, RNP-Miner returns the frequent pattern set \textit{FP}  (line 15).

\begin{algorithm}[!htb]
	\caption{ItemsetPatternJoin: Generate candidate patterns using itemset pattern join strategy.}
            \hspace*{0.02in} \leftline{{\bf Input:}
			RNP set \textit{FP}}
	    \hspace*{0.02in} \leftline{{\bf Output:}
			  Candidate pattern set \textit{C}}
	\begin{algorithmic}[1]
		\State \textit{C} $\leftarrow$ \{\};
		\For{each \textbf{p} in \textit{FP}}
		\State $ \textbf{p}_{suffix} \leftarrow $ \textit{suffix}(\textbf{p});
		\State \textit{min} ← 0; \textit{max} ←\textit{FP}.size(); \textit{i} ← (\textit{min}+\textit{max})/2;
        \While{(\textit{prefix}(\textit{FP}[\textit{i}]) $\neq$ $\textbf{p}_{suffix}$ \&\& \textit{max} $<$ \textit{FP}.size() \&\& \textit{min} $>$ 0 \&\& \textit{max} $>$ \textit{min}) $\|$ (\textit{prefix}(\textit{FP}[\textit{i}]) $==$ $\textbf{p}_{suffix}$ \&\& \textit{prefix}(\textit{FP}[\textit{i}-1]) $==$ $\textbf{p}_{suffix}$)}
        \If{\textit{prefix}(\textit{FP}[\textit{i}]) $\geq$ $\textbf{p}_{\textit{suffix}}$ }
        \State \textit{max} $\leftarrow$ \textit{i}-1;
        \Else
        \State  \textit{min} $\leftarrow$ \textit{i}+1;
        \EndIf
        \State \textit{i} $\leftarrow$ (\textit{min}+\textit{max})/2;
        \EndWhile
		\While{$ \textbf{p}_{suffix} $ == \textit{prefix}(\textit{FP}[\textit{i}])}
		\State \textbf{t} $\leftarrow$ \textbf{p} $\oplus$ \textbf{q};
		\State \textit{C} $\leftarrow$ \textit{C} $\cup$ \textbf{t};
		\State \textit{i} $\leftarrow$ \textit{i}+1;
		\EndWhile
		\EndFor
		\State \Return \textit{C};
	\end{algorithmic}
	\label{alg-join}
\end{algorithm}

The ItemsetPatternJoin algorithm first initializes \textit{C} as an empty set to store the generated candidate patterns (line 1). For each pattern \textbf{p} in frequent pattern set \textit{FP}, the algorithm obtains the suffix pattern of \textbf{p} and records it as $\textbf{p}_{suffix}$ (lines 2 to 3), and then uses binary search to find the first pattern in \textit{FP} whose prefix pattern is the same as $\textbf{p}_{suffix}$ (lines 4 to 12). After that, according to the itemset pattern join strategy, it generates candidate patterns and stores them in \textit{C}. The algorithm iterates this process until the prefix pattern of this pattern is not equal to $\textbf{p}_{suffix}$ (lines 13 to 17). Finally, the algorithm returns the candidate pattern set \textit{C}  (line 19).
		
\section{Experimental results and analysis}\label{section5}
To evaluate the efficiency of RNP-Miner, this paper considers the following Research Questions (RQs):

\begin{enumerate}[RQ1:]
\item Can the position dictionary improve the efficiency of support calculation?
\item Can CSC improve the computational efficiency of RNP-Miner?
\item Can pruning strategies effectively reduce the number of patterns and improve the algorithm's performance?
\item How does the itemset pattern join strategy perform compared with the classic enumeration strategy?
\item Does RNP-Miner give better performance than other classical state-of-the-art algorithms?
\item Does the value of the parameter $minsup$ affect the running time performance of RNP-Miner?
\item How good is the scalability of RNP-Miner for large-scale databases?
\item How does the clustering performance of RNPs compare with that of the raw database and classic frequent patterns?
\end{enumerate}
 
For RQ1, we use NOSEP-RNP and SNP-RNP to verify the effectiveness of the position dictionary (subsection \ref{subsection:5.2}). For RQ2, we apply DFOM-RNP and Pro-RNP to investigate the effect of CSC shown in subsection \ref{subsection:5.2}. For RQ3, we propose RNP-PUI and RNP-PUP to verify the effect of the pruning strategies (subsection \ref{subsection:5.2}). For RQ4, we propose RNP-B and RNP-D to explore the effect of the itemset pattern join strategy (subsection \ref{subsection:5.2}). For RQ5, we select PrefixSpan to explore the mining performance of RNP-Miner, as described in subsection \ref{subsection:5.2}. For RQ6, we verify the running performance with different $minsup$ values in subsection \ref{subsection:5.4}. For RQ7, we test the scalability of RNP-Miner using large-scale databases in subsection \ref{subsection:5.5}. For RQ8, we select the state-of-the-art SPM method CM-SPAM as a competitive algorithm and verify the clustering performance using RNPs (subsection \ref{subsection:5.6}).

\subsection{Databases and baseline algorithms}
\label{subsection:5.1}
To verify the performance of RNP-Miner, 20 databases were chosen, which are described in Table \ref{t-exper}. Note that $ |\sum| $ is the number of different items in the database, $ |I| $ is the sum of the sizes of all sequences, $ |s| $ is the number of sequences, Avg(\textit{I}) is the average number of itemsets per sequence, Avg(\textit{i}) is the average number of items per itemset, and $ |D| $ is the sum of the lengths of all sequences. 

\begin{table}[ht]
	\centering
	\scriptsize
	\caption{Description of databases}
    \resizebox{11cm}{!}{
	\begin{tabular}{llcccccc}
		\hline
		{Database}	& Name	& $ |\sum| $	& $ |I| $	& $ |s| $ & Avg(\textit{I}) & Avg(\textit{i}) & $ |D| $	\\
		\hline
		{SDB1} & Babysale & 6 & 3,967 & 359 & 11.05 & 2.49 & 9,870 \\
		{SDB2} & E-Shop & 283 & 5,258 & 766 & 6.84 & 9 & 47,322 \\
		{SDB3} & PM 2.5 & 580 & 1,729 & 60 & 28.82 & 20.09 & 34,743 \\
		{SDB4} & Online & 2,904 & 5,551 & 893 & 6.22 & 17.43 & 96,774 \\
		{SDB5} & OnlineRetail-Best & 7,370 & 11,367 & 1,997 & 5.59 & 9 & 102,303 \\
		{SDB6} & OnlineRetail-All & 15,722 & 5,463 & 983 & 5.56 & 21.13 & 115,406 \\
		{SDB7} & Sign & 267 & 37,958 & 730 & 52 & 1 & 37,958 \\
		{SDB8} & Leviathan & 9,024 & 197,093 & 5,834 & 33.78 & 1 & 197,093 \\
		{SDB9} & E-Shop-1k & 63 & 25 & 3 & 8.33 & 9 & 225 \\
		{SDB10} & E-Shop-5k & 145 & 142 & 23 & 6.17 & 9 & 1,278 \\
		{SDB11} & E-Shop-10k & 177 & 283 & 46 & 6.15 & 9 & 2,547 \\
		{SDB12} & E-Shop-50k & 266 & 1,415 & 210 & 6.74 & 9 & 12,735 \\
		{SDB13} & E-Shop-100k & 275 & 2,882 & 425 & 6.78 & 9 & 25,938 \\
		{SDB14} & E-Shop-500k & 291 & 14,416 & 2,085 & 6.91 & 9 & 129,744 \\
		{SDB15} & E-Shop-1000k & 303 & 28,851 & 4,193 & 6.88 & 9 & 259,659 \\
		{SDB16} & E-Shop-5000k & 312 & 144,275 & 20,979 & 6.88 & 9 & 1,298,475 \\
            {SDB17} & BC & 10 & 2,560 & 20 & 128 & 1 & 2,560\\
            {SDB18} & BF & 20 & 1,200 & 20 & 60 & 1 & 1,200\\
            {SDB19} & Credit & 18 & 180,000 & 10,000 & 18 & 1 & 180,000\\
            {SDB20} & Letter & 16 & 320,000 & 20,000 & 16 & 1 & 320,000\\
		\hline
	\end{tabular}}
	\label{t-exper}
	
 \flushleft
	\scriptsize
	\begin{enumerate}[Note 1:] 
		\item SDB1 is a sale database for infant products downloaded from https://tianchi.aliyun.com/dataset/dataDetail?dataId=45.
		\item SDB2 and SDB9-16 are click-stream databases for online shopping downloaded from http://www.philippe-fournier-viger.com/spmf/datasets/e\_shop.txt.
		\item SDB3 is a database containing the PM2.5 data of the US Embassy in Beijing downloaded from https://archive.ics.uci.edu/ml/datasets/Beijing+PM2.5+Data.
		\item SDB4 is a sale database for large furniture downloaded from https://www.kaggle.com/datasets/carrie1/ecommerce-data.
		\item SDB5-9 are real databases downloaded from http://www.philippe-fournier-viger.com/spmf/index.php?link=datasets.php.
        \item SDB17 and SDB18 are time series databases downloaded from http://www.timeseriesclassification.com/dataset.php.
        \item SDB19 is a credit card transaction database downloaded from http://www.kesci.cpm/mw/dataset/5b56a592fc7e9000103c0442.
        \item SDB20 is a real database downloaded from http://archive.ics.uci.edu/ml.
	\end{enumerate}
\end{table}

To validate the performance of RNP-Miner, ten competitive algorithms were selected:

1. NOSEP-RNP \cite{24nosep} and SNP-RNP \cite{26SNP2022apin}: To verify the efficiency of the position dictionary in support calculation, we select two state-of-the-art algorithms without position dictionary: NOSEP-RNP \cite{24nosep} and SNP-RNP \cite{26SNP2022apin} algorithms.

2. DFOM-RNP \cite{30HANP2021kbs} and Pro-RNP \cite{17tcyb2021constrast}: To evaluate the superiority of the CSC algorithm in calculating the support, we design the DFOM-RNP and Pro-RNP algorithms, which utilize the DFOM \cite{30HANP2021kbs} and MatchingPro \cite{17tcyb2021constrast} algorithms to calculate the support, respectively.

3. RNP-PUI and RNP-PUP: To validate the effectiveness of the pruning strategies, we propose RNP-PUI and RNP-PUP, which apply only the PUI and PUP strategies, respectively.

4. RNP-B and RNP-D: To verify the efficiency of the itemset pattern join strategy, we propose the RNP-B and RNP-D algorithms, which apply the enumeration strategy based on the breadth-first strategy and the depth-first strategy, respectively, in the candidate pattern generation step.

5. PrefixSpan \cite{57FrefixSpan} and CM-SPAM \cite{55CM-SPAM}: To compare the mining ability, we employ two classical state-of-the-art SPM methods: PrefixSpan \cite{57FrefixSpan} and CM-SPAM \cite{55CM-SPAM}.
	
All experiments are conducted on a computer with an Intel Xeon(R) Silver 4210 CPU @ 2.20 GHz*20 with 64 GB of memory and the Ubuntu operating system. All the algorithms were developed using the PyCharm environment and can be downloaded from https://github.com/wuc567/Pattern-Mining/tree/master/RNP-Miner.

\subsection{Mining performance and analysis}
\label{subsection:5.2}

To validate the performance of RNP-Miner, we used nine competitive algorithms and performed experiments on SDB1-SDB8. Since NOSEP-RNP, SNP-RNP, DFOM-RNP, Pro-RNP, RNP-PUI, RNP-PUP, RNP-B, and RNP-D are complete nonoverlapping SPM, the mining results are the same for these algorithms, and they can discover 1,014, 288, 736, 50, 8, 5, 461, and 13 patterns when $minsup$ = 800, 1,500, 170, 200, 3,000, 2,000, 300, and 4,000 on SDB1-SDB8, respectively. To make a fair comparison, we set different \textit{minsup} in PrefixSpan, since the definition of support in classical SPM is different from ours. We set  \textit{minsup} of PrefixSpan to 310, 345, 56, 170, 900, 240, 300, and 2,400 on SDB1-SDB8, respectively. Comparisons of running time, number of candidate patterns,  and memory usage are shown in Figs. \ref{eff-time}-\ref{eff-memory}, respectively. 

\begin{figure}[ht]
    \centering
    \includegraphics[width=0.75\linewidth]{"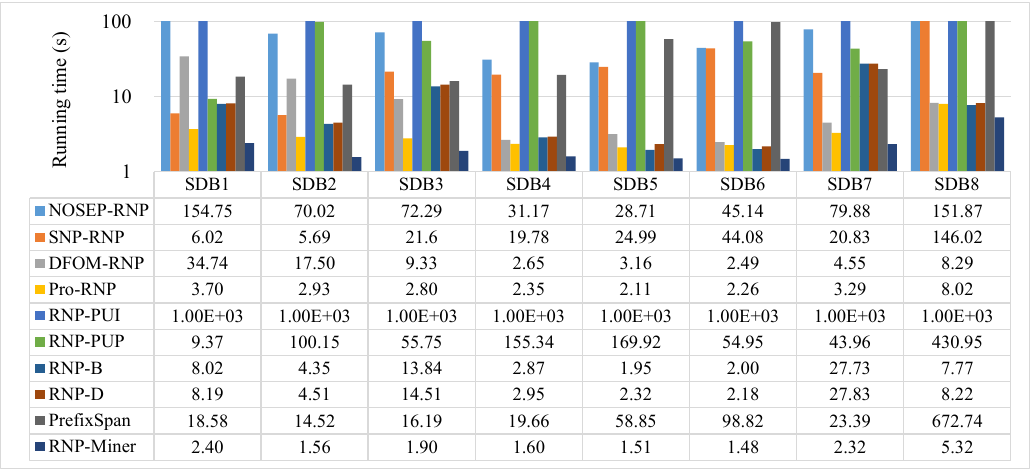"}
    \vspace{-0.9em}
    \caption{Comparison of running time}
    \label{eff-time}
    \vspace{-1.2em}
\end{figure}
 
\begin{figure}[ht]
    \centering
    \includegraphics[width=0.75\linewidth]{"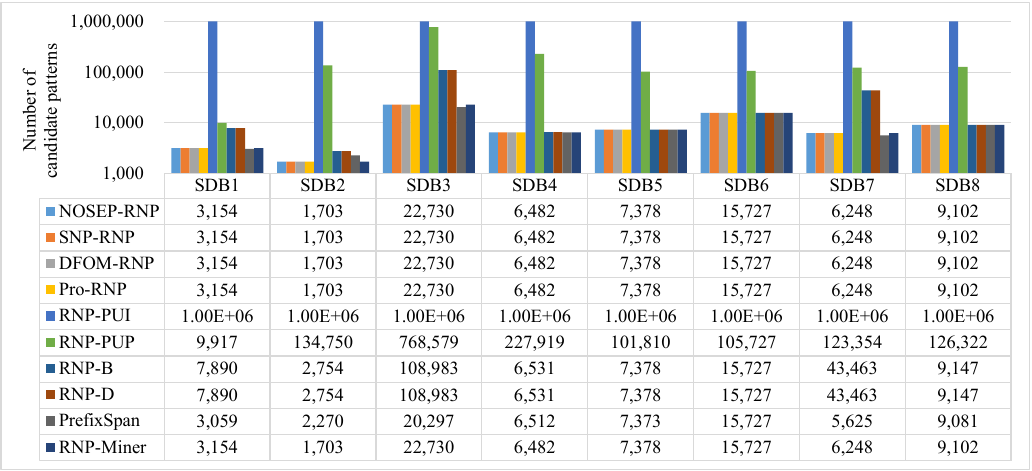"}
    \vspace{-0.9em}
    \caption{Comparison of number of candidate patterns}
    \label{eff-cps}
    \vspace{-1.2em}
\end{figure}
 
\begin{figure}[ht]
    \centering
    \includegraphics[width=0.75\linewidth]{"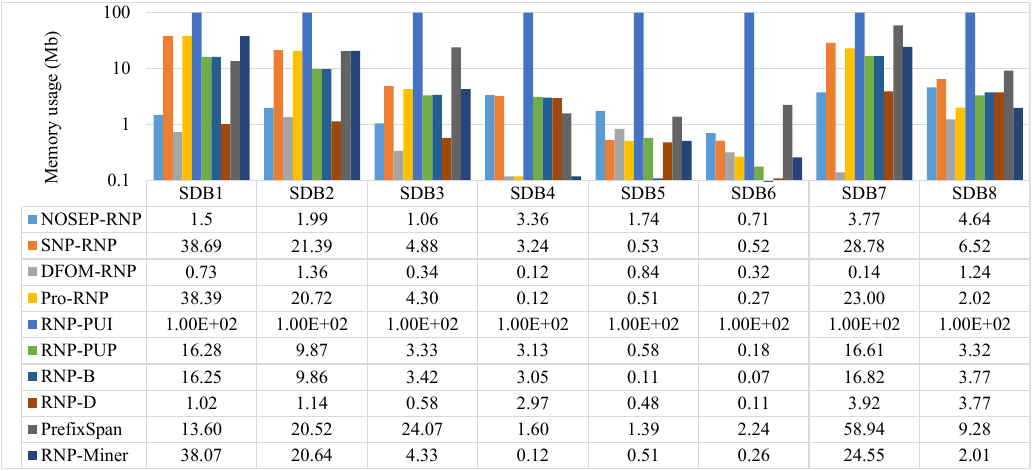"}
    \vspace{-0.9em}
    \caption{Comparison of memory usage}
    \label{eff-memory}
    \vspace{-0.8em}
\end{figure}

The results give rise to the following observations.

1. RNP-Miner runs faster than NOSEP-RNP and SNP-RNP on all databases, which indicates the advantage of using position dictionary. For example, from Fig. \ref{eff-time}, on SDB4, RNP-Miner takes 1.6 s, while NOSEP-RNP and SNP-RNP take 31.17 s and 19.78 s, respectively. Thus, RNP-Miner is about 20 and 12 times faster than NOSEP-RNP and SNP-RNP, respectively. The reason is as follows. The three algorithms adopt different methods to calculate the support. NOSEP-RNP has to create a whole Nettree which contains many useless nodes and parent-child relationships. Therefore, its time complexity is higher than those of SNP-RNP and RNP-Miner. Although SNP-RNP does not create a whole Nettree, it adopts an incomplete Nettree, which can improve the efficiency of the algorithm. However, it needs to scan the original database many times, which is inefficient. Hence, RNP-Miner outperforms NOSEP-RNP and SNP-RNP.

2. RNP-Miner has a better performance than both DFOM-RNP and Pro-RNP, which indicates that the CSC algorithm can efficiently improve the mining performance. Fig. \ref{eff-cps} shows that DFOM-RNP, Pro-RNP, and RNP-Miner mine the same number of candidate patterns, since all three algorithms adopt the same candidate pattern generation method. However, it can be seen that RNP-Miner runs faster than DFOM-RNP and Pro-RNP. For example, on the SDB1 database, DFOM-RNP and Pro-RNP take 34.74 s and 3.70 s, respectively, while RNP-Miner only takes 2.40 s, to find 3,154 candidate patterns. The reason is as follows. RNP-Miner uses the CSC algorithm, which calculates the supports of super-patterns based on the occurrence positions of sub-patterns and does not need to scan the database repeatedly. Therefore, CSC is more efficient than DFOM. Moreover, CSC stores the occurrence positions of patterns with a size of 1, which reduces the number of redundant calculations. Therefore, CSC is more efficient than MatchingPro. Hence, RNP-Miner runs faster than DFOM-RNP and Pro-RNP. However, Fig. \ref{eff-memory} shows that RNP-Miner and Pro-RNP consume more memory than DFOM-RNP. For example, on the SDB1 database, RNP-Miner and Pro-RNP consume 38.07 Mb and 38.39 Mb, respectively, while DFOM-RNP consumes 0.73 Mb. The reason is that both RNP-Miner and Pro-RNP need to store the occurrence positions of sub-patterns, which will lead to an increase in memory usage.

3. RNP-Miner outperforms RNP-PUI and RNP-PUP, which indicates that the pruning strategies can efficiently reduce the number of candidate patterns. Fig. \ref{eff-cps} shows that RNP-Miner generates fewer candidate patterns than RNP-PUI and RNP-PUP, and Fig. \ref{eff-time} shows that RNP-Miner runs faster than RNP-PUI and RNP-PUP. For example, on SDB2, RNP-PUI and RNP-PUP take 1.00E+03 s and 100.15 s for 1.00E+06 and 134,750 candidate patterns, respectively, while RNP-Miner takes 1.56 s for 1,703 candidate patterns. The reason is that the pruning strategies can efficiently prune unpromising items and reduce the number of redundant expansions. We know that the fewer the candidate patterns, the faster the algorithm runs. Hence, RNP-Miner runs faster than RNP-PUI and RNP-PUP.

4. RNP-Miner runs faster than RNP-B and RNP-D, which validates the effectiveness of the itemset pattern join strategy. Fig. \ref{eff-cps} shows that RNP-Miner generates fewer candidate patterns than RNP-B and RNP-D, and Fig. \ref{eff-time} shows that RNP-Miner runs faster than RNP-B and RNP-D. For example, on SDB2, RNP-B takes 4.35 s for 2,754 candidate patterns, and RNP-D takes 4.51 s for 2,754 candidate patterns, while RNP-Miner takes 1.56 s for 1,703 candidate patterns. The reason is as follows. RNP-B and RNP-D use the breadth-first and depth-first strategies based on the enumeration tree, respectively, while RNP-Miner uses the itemset pattern join strategy to generate candidate patterns. As described in the analysis in subsection \ref{subsection:Candidate pattern generation}, the itemset pattern join strategy outperforms the enumeration strategy. Thus, RNP-Miner calculates fewer candidate patterns than RNP-B and RNP-D. Hence, RNP-Miner achieves a better performance.

5. RNP-Miner has a better performance than PrefixSpan. When the numbers of mined patterns by PrefixSpan and RNP-Miner are similar, RNP-Miner runs faster than PrefixSpan on all databases, and consumes less memory than PrefixSpan on SDB2-SDB8. For example, on SDB3, PrefixSpan takes 16.19 s and 24.07 Mb, while RNP-Miner takes 1.90 s and 4.33 Mb. The reason is that PrefixSpan needs to do many database projections, which leads to greater time and memory consumption. Hence, experimental results show that RNP-Miner has the best performance.

\subsection{Statistical analysis}
\label{subsection:5.3}
To further analyze the significant differences in the efficiency of each algorithm, we use the results in subsection \ref{subsection:5.2} to calculate the Wilcoxon test results between RNP-Miner and NOSEP-RNP, SNP-RNP, DFOM-RNP, Pro-RNP, RNP-PUI, RNP-PUP, RNP-B, RNP-D, and PrefixSpan; the results are shown in Table \ref{t-test}. The \textit{z} statistic of each pair at the 5\% critical level is 2.521, and the corresponding \textit{p} value is 0.012*. This \textit{p} value is less than 0.05. Hence, there are significant differences between RNP-Miner and the other competitive algorithms.

\begin{table}[ht]
	\centering
	\scriptsize
	\caption{Wilcoxon analysis results of paired samples}
	\resizebox{10.5cm}{!}{
		\begin{tabular}{lccccc}
			\hline
			\multirow{2}{*}{Paired samples}	& \multicolumn{2}{c}{$ M(P_{25}, P_{75}) $}	& \multirow{2}{*}{$ |M_{1}-M_{2}|$} & \multirow{2}{*}{\textit{z}} & \multirow{2}{*}{\textit{p}}	\\
			& $ M_{1} (P_{25}, P_{75}) $ & $ M_{2} (P_{25}, P_{75}) $ \\
			\hline
			{NOSEP-RNP \& RNP-Miner} & 71.76(41.7, 97.9) & 1.75(1.5, 2.3) & 69.41 & 2.521 & 0.012* \\
			{SNP-RNP \& RNP-Miner} & 21.22(16.3, 29.8) & 1.75(1.5, 2.3) & 19.47 & 2.521 & 0.012* \\
			{DFOM-RNP \& RNP-Miner} & 6.42(3.0, 11.4) & 1.75(1.5, 2.3) & 4.67 & 2.521 & 0.012* \\
			{Pro-RNP \& RNP-Miner} & 2.87(2.3, 3.4) & 1.75(1.5, 2.3) & 1.12 & 2.521 & 0.012* \\
			{RNP-PUI \& RNP-Miner} & 1,000(1,000, 1,000) & 1.75(1.5, 2.3) & 998.25 & 2.521 & 0.012* \\
			{RNP-PUP \& RNP-Miner} & 77.95(55.2, 159.0) & 1.75(1.5, 2.3) & 76.2 & 2.521 & 0.012* \\
			{RNP-B \& RNP-Miner} & 6.06(2.7, 9.5) & 1.75(1.5, 2.3) & 4.31 & 2.521 & 0.012* \\
			{RNP-D \& RNP-Miner} & 6.35(2.8, 9.8) & 1.75(1.5, 2.3) & 4.6 & 2.521 & 0.012* \\
			{PrefixSpan \& RNP-Miner} & 21.53(18.0,68.8) & 1.75(1.5, 2.3) & 19.78 & 2.521 & 0.012* \\
			\hline
	\end{tabular}}
	\label{t-test}
\end{table}

Moreover, to compare the performances of all algorithms, we utilize Demsar's method \cite{66test2006, 67test2008}. For each database, we rank the competitive algorithms. The fastest one is ranked first, the second-fastest is ranked second, and so on and so forth. The mean ranks of NOSEP-RNP, SNP-RNP, DFOM-RNP, Pro-RNP, RNP-PUI, RNP-PUP, RNP-B, RNP-D, PrefixSpan, and RNP-Miner are 8, 5.5, 4.875, 2.5, 10, 8,125, 3.375, 4.625, 7, and 1, respectively. Now, we use the Friedman test to compare these mean ranks to decide whether or not to reject the null hypothesis. According to $ \tau_{\chi^2}=\dfrac{12n}{k(k+1)}(\sum_{i=1}^k r_i^2-\dfrac{k(k+1)^2}{4}) $ and $ \tau_{F}=\dfrac{(n-1) \tau_{\chi^2}}{n(k-1)-\tau_{\chi^2}} $, the Friedman test result is 28.316, which is greater than $ F_{0.05[9,63]} $ = 2.032. Hence, the null hypothesis of the Friedman test is rejected, that is, there exists a significant difference among the rival algorithms. To further analyze this difference, we use the Nemenyi test, and the results are shown in Fig. \ref{cd}. The dot indicates the mean rank of each scheme, and the black bar indicates the critical difference. Fig. \ref{cd} shows that RNP-Miner is ranked the highest and is significantly better than the other competitive algorithms.

	\begin{figure}[ht]
		\centering
		\includegraphics[width=0.65\linewidth]{"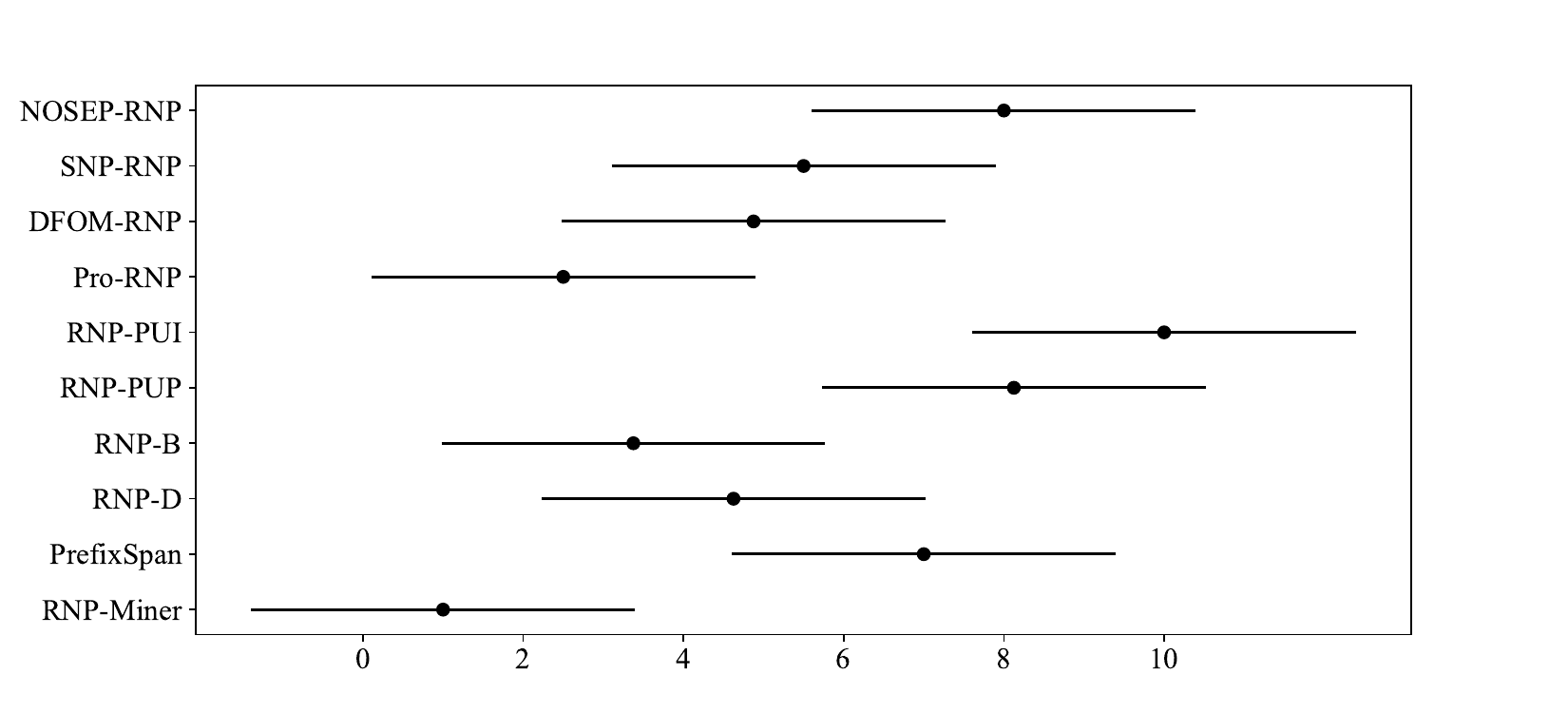"}
		\caption{Nemenyi test of algorithms' mean ranks in terms of the running time}
		\label{cd}
	\end{figure}

\subsection{Influence of different $minsup$ values}
\label{subsection:5.4}
To determine the influence of different \textit{minsup} values on the number of patterns, running time, and memory usage, we select NOSEP-RNP, SNP-RNP, DFOM-RNP, Pro-RNP, RNP-PUI, RNP-PUP, RNP-B, and RNP-D as competitive algorithms, and select SDB2 as the experimental database. We set \textit{minsup} = 1,600, 1,550, 1,500, 1,450, 1,400, 1,350, 1,300, and 1,250, respectively. Since all nine algorithms are complete, the mining results are the same for all of the algorithms, i.e., there are 196, 239, 288, 358, 448, 568, 779, and 942 patterns for  different \textit{minsup} values, respectively. Comparisons of running time, the number of candidate patterns, and memory usage are shown in Figs. \ref{minsup-time}-\ref{minsup-memory}.

\begin{figure}[ht]
		\centering
		\includegraphics[width=0.75\linewidth]{"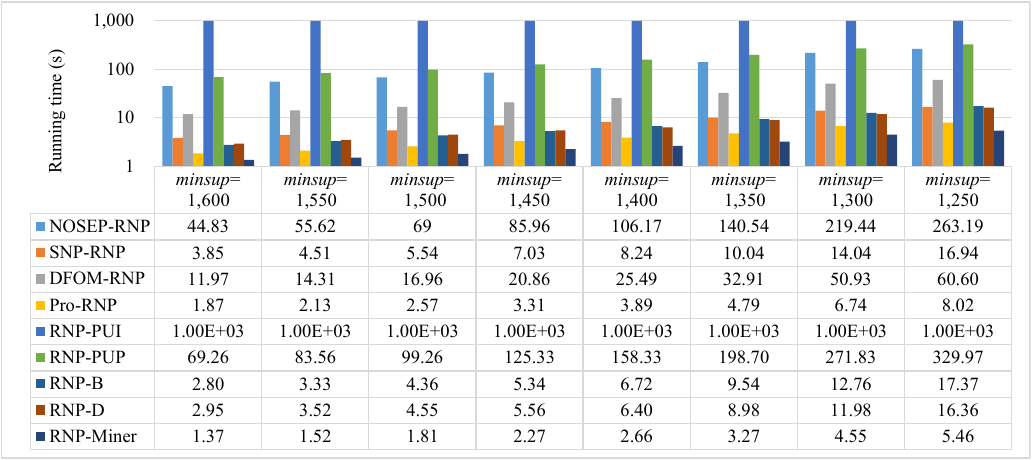"}
		\caption{Comparison of running time for different \textit{minsup} values}
		\label{minsup-time}
  \vspace{-1.2em}
\end{figure}

 \begin{figure}[ht]
		\centering
		\includegraphics[width=0.75\linewidth]{"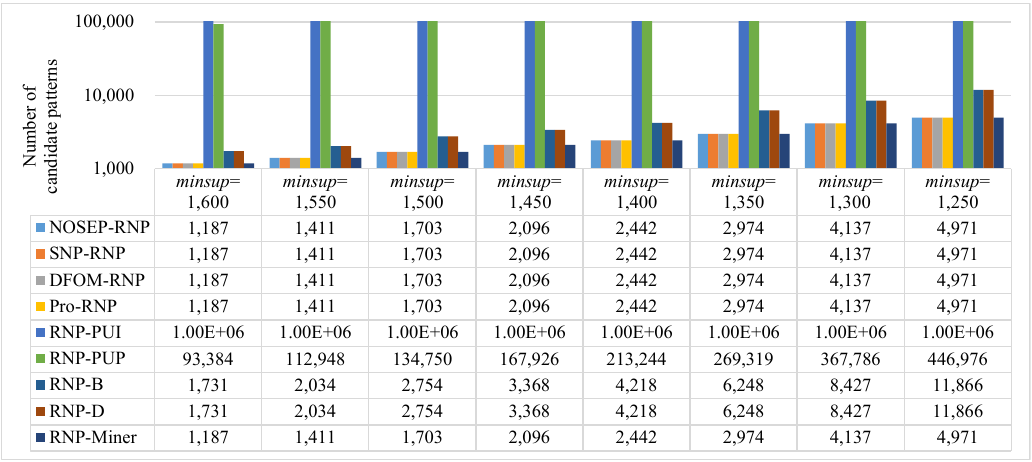"}
		\caption{Comparison of number of candidate patterns for  different \textit{minsup} values}
		\label{minsup-cps}
  \vspace{-1.2em}
\end{figure}

	\begin{figure}[ht]
		\centering
		\includegraphics[width=0.75\linewidth]{"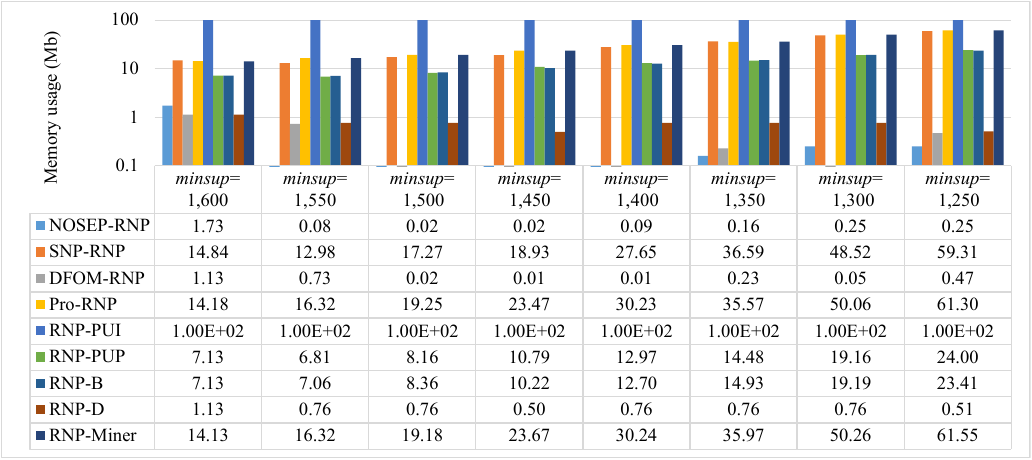"}
		\caption{Comparison of memory usage for different \textit{minsup} values}
		\label{minsup-memory}
    \end{figure}
    
The results give rise to the following observations.

1. Although \textit{minsup} is different, the average running time of each candidate pattern is almost the same.  For example, when \textit{minsup} = 1,600 and \textit{minsup} = 1,350, RNP-Miner checks 1,411 and 2,974 candidate patterns and runs for 1.52 s and 3.27 s, respectively. Thus, when \textit{minsup} = 1,600 and \textit{minsup} = 1,350, the average running time of each candidate pattern is 1.08 ms and 1.10 ms, respectively, which are almost the same.
 
2. As \textit{minsup} decreases, the number of RNPs, the number of candidate patterns, running time, and memory usage increase. For example, when \textit{minsup} = 1,600, RNP-Miner mines 169 RNPs, generates 1,187 candidate patterns, takes 1.37 s, and consumes 14.13 Mb, while when \textit{minsup} = 1,250, RNP-Miner mines 942 RNPs, generates 4,971 candidate patterns, takes 5.46 s, and consumes 61.55 Mb. This phenomenon can also be found in other competitive algorithms. The reason is as follows. As \textit{minsup} decreases, according to Definitions \ref{definition3} and \ref{definition4}, more patterns can be RNPs and candidate patterns. As discussed previously, for different \textit{minsup} values, the average running times of each candidate pattern are almost the same. Hence, the running time increases. Similarly, the memory usage also increases. More importantly, RNP-Miner outperforms other competitive algorithms for all the tested $minsup$ values, which is consistent with the results of subsection \ref{subsection:5.2}.

\subsection{Scalability}
\label{subsection:5.5}
To evaluate the scalability of RNP-Miner, we select NOSEP-RNP, SNP-RNP, DFOM-RNP, Pro-RNP, RNP-PUI, RNP-PUP, RNP-B, and RNP-D as competitive algorithms. Moreover, we utilize SDB9-SDB16 as the experimental databases; they have sizes of 1k, 5k, 10k, 50k, 100k, 500k, 1,000k, and 5,000k, respectively. Obviously, if \textit{minsup} is constant, the longer the sequence, the more RNPs will be generated. The running time and memory usage are positively correlated with the number of RNPs. To avoid the impact that different numbers of RNPs have on the running time and memory usage, we set \textit{minsup} = 7, 23, 51, 330, 765, 3,950, 7,600, and 38,000 on SDB9-SDB16 and mined 1,065, 795, 748, 748, 765, 737, 728, and 685 patterns, respectively. Comparisons of the running time and memory usage are shown in Figs. \ref{sca-time} and \ref{sca-memory}, respectively.
  
    \begin{figure}[ht]
    \vspace{-0.8em}
		\centering
		\includegraphics[width=0.75\linewidth]{"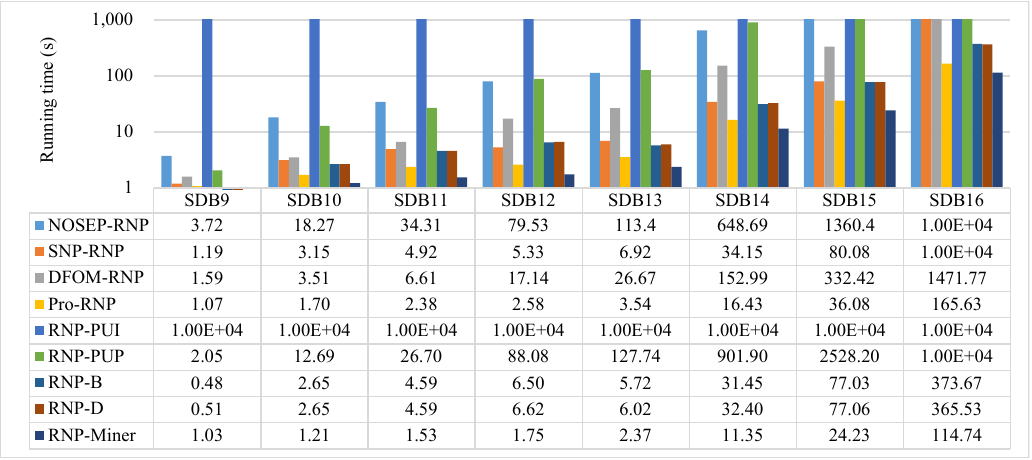"}
		\caption{Comparison of running time for different database sizes}
		\label{sca-time}
	\end{figure}
 
	\begin{figure}[ht]
		\centering
		\includegraphics[width=0.75\linewidth]{"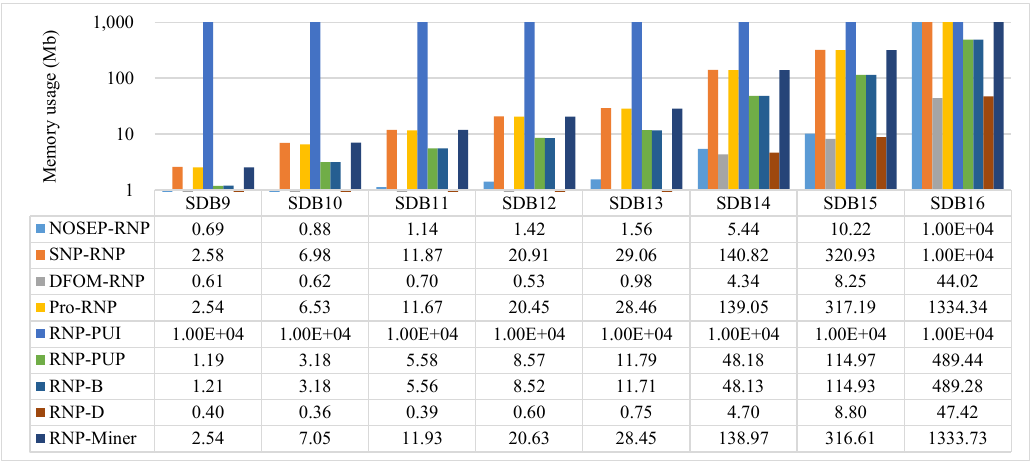"}
		\caption{Comparison of memory usage for different database sizes}
		\label{sca-memory}
	\end{figure}

The results give rise to the following observations.

From Figs. \ref{sca-time} and \ref{sca-memory}, the running time and memory usage grow more slowly than the database size. For example, the size of SDB12 is 50 times that of SDB9. RNP-Miner takes 1.75 s and consumes 20.63 Mb on SDB12, which is 1.75/1.03 = 1.699 times the running time and 20.63/2.54 = 8.122 times the memory usage on SDB9. This phenomenon can be found in all other databases. The results indicate that the running time and memory usage are positively correlated with the database size.

Moreover, RNP-Miner has better scalability than other competitive algorithms. When the size of the database increases from 1k to 5,000k, the increase in the running time of RNP-Miner is lower than that of the other algorithms. From Fig. \ref{sca-time}, the increases in the running time for NOSEP-RNP, SNP-RNP, DFOM-RNP, Pro-RNP, RNP-PUP, RNP-B, and RNP-D are 1.00E+04/3.72 = 2688.172, 1.00E+04/1.19 = 8403.361, 1471.77/1.59 = 925.642, 165.63/1.07 = 154.794, 13942.12/2.05 = 6801.034, 373.67/0.48 = 778.479, and 365.53/0.51 = 716.725, respectively, which are higher than that of RNP-Miner (114.74/1.03 = 111.398). However, from Fig. \ref{sca-memory}, the increase in the amount of memory usage for RNP-Miner is 1333.73/2.54 = 525.091, which is greater than the increases in memory usage of the comparison algorithms. The reason is the same as that described in subsection \ref{subsection:5.2}. In summary, RNP-Miner has strong scalability.

\subsection{Case study}
\label{subsection:5.6}
To validate the clustering performance of RNP-Miner, a clustering experiment is conducted in this subsection. We utilized SDB1-8 and SDB17-20 as the raw data to conduct the experiment according to the following steps:

Step 1: We employed CM-SPAM \cite{55CM-SPAM} to mine the classic frequent patterns (FPs) without considering the repetitions of patterns in a sequence and used RNP-Miner to mine the RNPs considering the repetition of patterns and nonoverlapping condition. Since the two algorithms have different definitions of support, they cannot mine the same number of patterns. For fairness, we set two $minsup$ for CM-SPAM to mine slightly fewer and slightly more patterns than RNPs, and the mining results are recorded as FP1s and FP2s, respectively. The setting of $minsup$ and the number of mined patterns in this experiment are shown in Table \ref{case-minsup}.

    \begin{table}[!ht]
	\centering
	\scriptsize
	\caption{The setting of \textit{minsup} and  number of mined patterns}
	\resizebox{12cm}{!}{
		\begin{tabular}{lcccccc}
			\hline
			\multirow{2}{*}{}	& \multicolumn{4}{c}{CM-SPAM}	& \multicolumn{2}{c}{RNP-Miner} \\
                {} & \textit{minsup} & Number of FP1s & \textit{minsup} & Number of FP2s & \textit{minsup} & Number of RNPs \\
			\hline
			{SDB1} & 0.98	& 77	& 0.93	& 287	& 1,200	& 165\\
			{SDB2} & 0.50	& 137	& 0.45	& 293	& 1,500	& 288\\
                {SDB3} & 0.95	& 478	& 0.93	& 1100	& 170	& 736\\
                {SDB4} & 0.20	& 43	& 0.19	& 52	& 200	& 50 \\
                {SDB5} & 0.50	& 2	& 0.40	& 18	& 3,000	& 8\\
                {SDB6} & 0.25	& 4	& 0.20	& 8	& 2,000	& 5\\
                {SDB7} & 0.44	& 332	& 0.38	& 636	& 300	& 461\\
                {SDB8} & 0.43	& 10	& 0.3	& 22	& 400	& 13\\
                {SDB17} & 0.96	& 216	& 0.91	& 312	& 150	& 273\\
                {SDB18} & 0.98	& 63	& 0.90	& 185	& 35	& 165\\
                {SDB19} & 0.95	& 74	& 0.89	& 116	& 6,000	& 101\\
                {SDB20} & 0.84	& 79	& 0.77	& 132	& 9,000	& 107\\
			\hline
	    \end{tabular}}
	\label{case-minsup}
    \end{table}

Step 2: The K-means++ method is adopted to cluster the Raw data, FP1s, FP2s, and RNPs, respectively.
 
Step 3: For databases without actual classification (SDB1-8), we use two metrics to determine the clustering performance: Silhouette Coefficient (\textit{SC}) and Calinski-Harabaz Index (\textit{CHI}), which can be calculated using Equations \ref{eq:SC} and \ref{eq:CHI}, respectively:

\begin{small}
    \begin{equation}
     SC=\dfrac{1}{N}\sum_{i=1}^{N} \dfrac{b(i)-a(i)}{max \{a(i),b(i)\}}
     \label{eq:SC}
    \end{equation}
\end{small}
            
\begin{small}
    \begin{equation}
        CHI=\dfrac{T_r(B_K)}{T_r(W_K)} \times \dfrac{N-K}{K-1}
    \label{eq:CHI}
    \end{equation}
\end{small}
            
        where \textit{N} is the number of samples. The experiments are conducted with \textit{K} = 2,3,4,$\dots$, 10. The results are shown in Tables \ref{t-SC} and \ref{t-CHI}.

\begin{table}[!ht]
    \centering
    \scriptsize
    \caption{Comparison of \textit{SC} on the databases without actual classification}
    \vspace{2pt}
    \resizebox{11cm}{!}{
    \begin{tabular}{lccccccccc}
                    \hline
                    \multicolumn{2}{c}{} & \textit{SDB1} &  \textit{SDB2} & \textit{SDB3} & \textit{SDB4} & \textit{SDB5} & \textit{SDB6} & \textit{SDB7} &\textit{SDB8} \\
                    \hline
    \multirow{4}{*}{\textit{K}=2} & {Raw} & 0.06 & 0.29 & 0.07 & 0.25 & 0.43 & 0.43 & 0.11 & 0.34\\
     & {FP1s} & 0.85 & 0.53 & 0.39 & 0.19 & 0.67 & 0.78 & 0.07 & 0.27\\
     & {FP2s} & 0.84	& 0.57 & 0.37 & 0.15 & 0.64	& 0.64 & 0.11 & 0.39\\
     & RNPs & \textbf{0.92} & \textbf{0.74} & \textbf{0.55} & \textbf{0.95} & \textbf{0.94} & \textbf{0.94} & \textbf{0.13} & \textbf{0.89}\\
    
     \hline
     \multirow{4}{*}{\textit{K}=3} & Raw	& 0.02 & 0.23 & 0.11 & 0.14 & 0.41 & 0.33 & \textbf{0.10} & 0.33\\
     & FP1s & 0.47 & 0.50 & 0.25 & 0.12 & {-} & 0.77 & 0.06 & 0.16\\
     & FP2s & 0.43 & 0.48 & 0.24 & 0.08 & 0.58 & 0.72 & \textbf{0.10} & 0.29\\
     & RNPs & \textbf{0.84} & \textbf{0.62} & \textbf{0.49} & \textbf{0.95} & \textbf{0.85} & \textbf{0.88} & \textbf{0.10} & \textbf{0.87}\\
    
                    \hline                              
    \multirow{4}{*}{\textit{K}=4} & Raw	& 0.01	& 0.16	& 0.10	& 0.12	& 0.37	& 0.25	& \textbf{0.09}	& 0.24\\
     & FP1s & 0.36 & 0.52 & 0.21 & 0.13 & - & 0.77 & 0.05 & 0.17\\
     & FP2s & 0.36 & 0.49 & 0.21 & 0.07 & 0.64 & 0.80 & \textbf{0.09} & 0.28\\
     & RNPs & \textbf{0.82} & \textbf{0.55} & \textbf{0.46} & \textbf{0.29} & \textbf{0.79} & \textbf{0.86} & \textbf{0.09} & \textbf{0.89}\\

                    \hline
                    \multirow{4}{*}{\textit{K}=5} & Raw & 0.02 & 0.17 & 0.09 & 0.13 & 0.33 & 0.23 & \textbf{0.09} & 0.16\\
      & FP1s & 0.37 & 0.53 & 0.19 & 0.08 & - & - & 0.05 & 0.09\\
     & FP2s & 0.36 & 0.50 & 0.17 & 0.06 & \textbf{0.78} & 0.74 & \textbf{0.09} & 0.19\\
      & RNPs & \textbf{0.82} & \textbf{0.56} & \textbf{0.46} & \textbf{0.27} & 0.76 & \textbf{0.75} & \textbf{0.09} & \textbf{0.78}\\
                    \hline
     \multirow{4}{*}{\textit{K}=6} & Raw & 0.03 & 0.17 & 0.13 & 0.10 & 0.33 & 0.18 & 0.10 & 0.13\\
      & FP1s & 0.31 & 0.49 & 0.20 & 0.08 & - & - & 0.05 & 0.06\\
     & FP2s & 0.30 & 0.48 & 0.19 & 0.06 & 0.76 & 0.71 & 0.10 & 0.17\\
     & RNPs & \textbf{0.79} & \textbf{0.56} & \textbf{0.48} & \textbf{0.13} & \textbf{0.85} & \textbf{0.74} & \textbf{0.11} & \textbf{0.73}\\
                    \hline
      \multirow{4}{*}{\textit{K}=7} & Raw & 0.02 & 0.16 & 0.05 & 0.07 & 0.32 & 0.19 & 0.07 & 0.12\\
     & FP1s & 0.32 & 0.49 & 0.20 & 0.08 & - & - & 0.05 & 0.06\\
                                                  & FP2s & 0.30 & 0.49 & 0.19 & 0.05 & 0.72 & 0.67 & \textbf{0.10} & 0.17\\
                                                  & RNPs & \textbf{0.81} & \textbf{0.53} & \textbf{0.46} & \textbf{0.20} & \textbf{0.91} & \textbf{0.74} & \textbf{0.10} & \textbf{0.73}\\
                                                
                    \hline
                    \multirow{4}{*}{\textit{K}=8} & Raw & 0.02 & 0.16 & 0.02 & 0.08 & 0.32 & 0.16 & 0.07 & 0.12\\
                                                  & FP1s & 0.30 & 0.42 & 0.10 & 0.08 & - & - & 0.05 & 0.07\\
                                                  & FP2s & 0.24 & 0.48 & 0.11 & 0.05 & 0.68 & 0.66 & \textbf{0.10} & 0.15\\
                                                  & RNPs & \textbf{0.70} & \textbf{0.52} & \textbf{0.45} & \textbf{0.16} & \textbf{0.96} & \textbf{0.74} & \textbf{0.10} & \textbf{0.68}\\
                    \hline
                    \multirow{4}{*}{\textit{K}=9} & Raw & 0.03 & 0.11 & 0.06 & 0.07 & 0.32 & 0.13 & 0.07 & 0.11\\
                                                  & FP1s & 0.26 & 0.42 & 0.15 & 0.08 & - & - & 0.05 & 0.05\\
                                                  & FP2s & 0.22 & 0.45 & 0.12 & 0.05 & 0.64 & - & \textbf{0.10} & 0.15\\
                                                  & RNPs & \textbf{0.71} & \textbf{0.49} & \textbf{0.46} & \textbf{0.16} & \textbf{0.86} & \textbf{0.72} & \textbf{0.10} & \textbf{0.70}\\
        
                    \hline
                    \multirow{4}{*}{\textit{K}=10} & Raw & 0.02 & 0.12 & 0.05 & 0.07 & 0.32 & 0.12 & 0.07 & 0.11\\
                                                   & FP1s & 0.26 & 0.46 & 0.13 & 0.07 & - & - & 0.05 & 0.06\\
                                                   & FP2s  & 0.22 & 0.38 & 0.11 & 0.04 & 0.61 & - & 0.09 & 0.15\\
                                                   & RNPs & \textbf{0.71} & \textbf{0.50} & \textbf{0.37} & \textbf{0.12} & \textbf{0.89} & \textbf{0.72} & \textbf{0.10} & \textbf{0.72}\\
        
                    \hline
        	\end{tabular}}
            \label{t-SC}
        \flushleft
	\scriptsize
	Note: - indicates that the result cannot be obtained.
        \end{table}

        \begin{table}[ht]
            \centering
            \scriptsize
            \caption{Comparison of \textit{CHI} on the databases without actual classification}
            \vspace{2pt}
            \resizebox{11cm}{!}{
            \begin{tabular}{lccccccccc}
                \hline
                \multicolumn{2}{c}{} & \textit{SDB1} &  \textit{SDB2} & \textit{SDB3} & \textit{SDB4} & \textit{SDB5} & \textit{SDB6} & \textit{SDB7} &\textit{SDB8} \\
                \hline
                \multirow{4}{*}{\textit{K}=2} & Raw & 14 & 335 & 5 & 101 & 306 & 306 & 92 & 1607\\
                                              & FP1s & 432 & 1025 & 24 & 127 & 2685 & \textbf{1895} & 56 & 691\\
                                              & FP2s & 391 & \textbf{1297} & 36 & 125 & 2692 & 1789 & 92 & 956\\
                                              & RNPs & \textbf{455} & 1164 & \textbf{44} & \textbf{428} & \textbf{3153} & 1652 &\textbf{104} & \textbf{3100}\\
        
                \hline
                \multirow{4}{*}{\textit{K}=3} & Raw & 12 & 222 & 5 & 66 & 805 & 254 & 71 & 1261\\
                                              & FP1s & 531 & 785 & 17 & 91 & - & 2044 & 42 & 440\\
                                              & FP2s & 293 & 1902 & 24 & 86 & 2996 & 1975 & 71 & 617\\
                                              & RNPs & \textbf{598} & \textbf{1099} & \textbf{35} & \textbf{820} & \textbf{3813} & \textbf{2203} &\textbf{84} & \textbf{2530}\\
        
                \hline                               
                \multirow{4}{*}{\textit{K}=4} & Raw & 10 & 174 & 5 & 53 & 821 & 230 & 59 & 949\\
                                              & FP1s & 431 & 719 & 14 & 72 & - & 3713 & 36 & 331\\
                                              & FP2s & 234 & 853 & 19 & 69 & 3431 & 3711 & \textbf{59} & 463\\
                                              & RNPs & \textbf{589} & \textbf{1027} & \textbf{31} & \textbf{570} & \textbf{6267} & \textbf{3734} & \textbf{59} & \textbf{1953}\\
                                            
                \hline
                \multirow{4}{*}{\textit{K}=5} & Raw & 9 & 150 & 5 & 45 & 777 & 198 & 53 & 746\\
                                              & FP1s & 378 & 625 & 12 & 64 & - & - & 31 & 270\\
                                              & FP2s & 199 & 759 & 16 & 58 & 4532 & 4388 & 53 & 374\\
                                              & RNPs & \textbf{604} & \textbf{995} & \textbf{25} & \textbf{424} & \textbf{7134} & \textbf{4426} & \textbf{62} & \textbf{1550}\\
                                            
                \hline
                \multirow{4}{*}{\textit{K}=6} & Raw & 10 & 132 & 5 & 37 & 742 & 178 & 49 & 620\\
                                              & FP1s & 349 & 557 & 12 & 56 & - & - & 28 & 231\\
                                              & FP2s & 176 & 679 & 15 & 52 & 6317 & 4526 & 49 & 321\\
                                              & RNPs & \textbf{624} & \textbf{1008} & \textbf{22} & \textbf{348} & \textbf{7955} & \textbf{4806} & \textbf{54} & \textbf{1297}\\
                                            
                \hline
                \multirow{4}{*}{\textit{K}=7} & Raw & 9 & 115 & 4 & 35 & 707 & 163 & 46 & 533\\
                                              & FP1s & 333 & 491 & 11 & 50 & - & - & 26 & 202\\
                                              & FP2s & 161 & 597 & 13 & 46 & 10202 & 5819 & 46 & 283\\
                                              & RNPs & \textbf{645} & \textbf{1013} & \textbf{20} & \textbf{295} & \textbf{10317} & \textbf{6056} & \textbf{48} & \textbf{1132}\\

                \hline
                \multirow{4}{*}{\textit{K}=8} & Raw & 8 & 103 & 4 & 32 & 664 & 149 & 43 & 470\\
                                              & FP1s & 326 & 449 & 11 & 56 & - & - & 24 & 180\\
                                              & FP2s & 154 & 537 & 13 & 43 & 10588 & 5962 & 43 & 254\\
                                              & RNPs & \textbf{615} & \textbf{988} & \textbf{19} & \textbf{257} & \textbf{11241} & \textbf{6592} & \textbf{44} & \textbf{1002}\\
        
                \hline
                \multirow{4}{*}{\textit{K}=9} & Raw & 8 & 95 & 4 & 29 & 643 & 137 & 40 & 424\\
                                              & FP1s & 318 & 411 & 11 & 43 & - & - & 23 & 164\\
                                              & FP2s & 149 & 491 & 13 & 39 & 10845 & - & 40 & 229\\
                                              & RNPs & \textbf{589} & \textbf{975} & \textbf{18} & \textbf{226} & \textbf{11797} & \textbf{7999} & \textbf{41} & \textbf{903}\\
        
                \hline
                \multirow{4}{*}{\textit{K}=10} & Raw & 8 & 88 & 4 & 27 & 620 & 127 & \textbf{38} & 386\\
                                               & FP1s & 326 & 385 & 11 & 40 & - & - & 21 & 151\\
                                               & FP2s & 148 & 460 & 13 & 35 & 11291 & - & \textbf{38} & 211\\
                                               & RNPs & \textbf{571} & \textbf{952} & \textbf{17} & \textbf{204} & \textbf{12893} & \textbf{9345} & \textbf{38} & \textbf{827}\\

                \hline
            \end{tabular}}
            \label{t-CHI}
        \flushleft
	\scriptsize
	Note: - indicates that the result cannot be obtained.
        \end{table}

Step 4: For databases with actual classification (SDB17-20), we use two metrics to evaluate the clustering performance: normalized mutual information (\textit{NMI}) \cite{68nmi} and homogeneity (\textit{h}) \cite{69h} which can be calculated using Equations \ref{eq:NMI} and \ref{eq:h}, respectively. The results are shown in Table \ref{t-NMI&h}.

        \begin{small}
        \begin{equation}
        NMI(X,Y)=\dfrac{\sum_{i=1}^{|X|} \sum_{j=1}^{|Y|} P(i,j) {\rm log} (\dfrac{P(i,j)}{P(i)P(j)})}{\sqrt{\sum_{i=1}^{|X|} P(i) {\rm log} P(i) \times \sum_{j=1}^{|Y|} P(j) {\rm log} P(j)}}
        \label{eq:NMI}
        \end{equation}
        \end{small}
        \begin{small}
        \begin{equation}
        h(X,Y)=1-\dfrac{-\sum_{i=1}^{|X|} \sum_{j=1}^{|Y|} P(i,j) {\rm log}P(i|j)}{- \sum_{i=1}^{|X|} P(i) {\rm log}P(i)}
        \label{eq:h}
        \end{equation}
        \end{small}

    \begin{table}[ht]
    	\centering
    	\scriptsize
    	\caption{Comparison of \textit{NMI} and $h$ on the databases with actual classification}
        \vspace{2pt}
    	\resizebox{9cm}{!}{
    		\begin{tabular}{lcccccccc}
    			\hline
    			\multirow{2}{*}{}	& \multicolumn{2}{c}{SDB17}	& \multicolumn{2}{c}{SDB18}	& \multicolumn{2}{c}{SDB19}	& \multicolumn{2}{c}{SDB20}	\\
                {} & \textit{NMI} & \textit{h} & \textit{NMI} & \textit{h} & \textit{NMI} & \textit{h} & \textit{NMI} & \textit{h} \\
    		\hline
    			{Raw} &	0.05	&	0.05	&	0.01	&	0.01	&	0.28	&	0.23	&	0.03	&	0.06 \\
    			{FP1s} &	0.20	&	0.15	&	0.08	&	0.06	&	0.35	&	0.32	&	0.15	&	0.14 \\
                    {FP2s} & 0.29	&	0.23	&	0.48	&	0.44	&	0.76	&	0.73	&	0.26	&	0.26 \\
    			{RNPs} &	\textbf{0.75}	&	\textbf{0.74}	&	\textbf{0.73}	&	\textbf{0.70}	&	\textbf{0.86}	&	\textbf{0.86}	&	\textbf{0.92}	&	\textbf{0.89} \\			
    			\hline
    	\end{tabular}}
    	\label{t-NMI&h}
    \end{table}
    

The results give rise to the following observations.

The experimental results on databases with and without actual classification indicate that the clustering performance using Raw is the worst, the clustering performances using FP1s and FP2s are better than Raw, and the clustering performance using RNPs is the best.  For example, on the databases without actual classification, \textit{SC} and \textit{CHI} are two metrics for evaluating clustering performance. The range of \textit{SC} is [-1,1]. The closer the value is to 1, the better the clustering performance. \textit{CHI} has no upper bound. The higher the value, the better the clustering performance. As shown in Tables \ref{t-SC} and \ref{t-CHI}, regardless of the value of \textit{K}, the \textit{SC} and \textit{CHI} values of RNPs are higher than those of Raw and FPs in most cases. For example, on SDB1, when \textit{K}=2, the \textit{SC} and \textit{CHI} values of the Raw data clustering results are 0.06 and 14, respectively,  \textit{SC} and \textit{CHI} of FP1s are 0.85 and 432, respectively,  \textit{SC} and \textit{CHI} of FP2s are 0.84 and 391, respectively, while \textit{SC} and \textit{CHI} of RNPs are 0.92 and 455, respectively.

Moreover, on the databases with actual classification, \textit{NMI} and \textit{h} are two metrics for evaluating clustering performance, these reflect the similarity between the clustering results and the actual values. The greater the \textit{NMI} and \textit{h} values, the more similarity there is between the clustering results and the actual values. As shown in Table \ref{t-NMI&h}, no matter which databases, the \textit{NMI} and \textit{h} values of RNPs are higher than those of the Raw data, FP1s, and FP2s. For example, on SDB17, the \textit{NMI} and \textit{h} values of the Raw data clustering results are 0.05 and 0.05, respectively, the \textit{NMI} and \textit{h} values of the FP1s are 0.20 and 0.15, respectively, the \textit{NMI} and \textit{h} values of the FP2s are 0.29 and 0.23, respectively, while the \textit{NMI} and \textit{h} values of RNPs are 0.75 and 0.74, respectively. This phenomenon can also be found in all the other databases. The reason is as follows. 

1. It is difficult to obtain good clustering results using the raw data, since the Raw data generally contains much redundant information, which will affect the clustering performance. Hence, the clustering performance using Raw is the worst.

2. Classical SPM determines whether a pattern occurs in a sequence or not. A sequence represents a user's sequential purchase behavior. Thus, the frequent patterns are the common purchasing behaviors of many users, which can reflect their common interests. Therefore, using the classical SPM method to extract features provides a better clustering performance than clustering the original data. Hence, the clustering performances using FP1s and FP2s are better than Raw. 

3. Nevertheless, some users' interests are completely different. For example, the interests of users who only buy once and the interests of those who buy many times are obviously different. However, using classical SPM methods, users cannot know the number of occurrences of a pattern in a sequence, and they only know whether a pattern occurs in a sequence or not. Hence, classical SPM cannot reflect the number of occurrences, i.e., the degree of interest.

4. Our mining method is a repetitive SPM method, and it calculates the number of occurrences, which represents the degree of interest. Obviously, we can obtain a better clustering performance by using the user's degree of interest. Therefore, using our method to extract features provides a better clustering performance than using the classical SPM method. Hence, our mining method outperforms classical SPM in feature extraction. 

\subsection{Engineering Applications}
\label{subsection:Engineering Applications}
The proposed mining method can be used in many applications, as shown in Fig. \ref{appli}, such as purchase behavior analysis, web intrusion detection, and biological sequence analysis. For example, B2C e-commerce customer behavior patterns based on user purchase records can be analyzed to develop reasonable marketing strategies. Compared with traditional SPM methods, RNP-Miner can detect customer repetitive purchase behaviors, and merchants can develop more targeted strategies based on whether customers repurchase and the number of times they repurchase. In biological sequence analysis, RNP-Miner can also discover information overlooked by traditional biological sequence mining through the repetition of pattern in each sequence. The application of repetitive pattern mining methods in genomic data helps to discover how many different repetitive sequences are presented in the genome, and how their number of replicates varies in different organisms, thus identifying the relationship between repetitive sequences and diseases. For example, the number of copies of the gene named CCL3L1 in different races is different. The more the number of this gene, the stronger resistant it is to HIV.
\begin{figure}[ht]
		\centering
		\includegraphics[width=0.6\linewidth]{"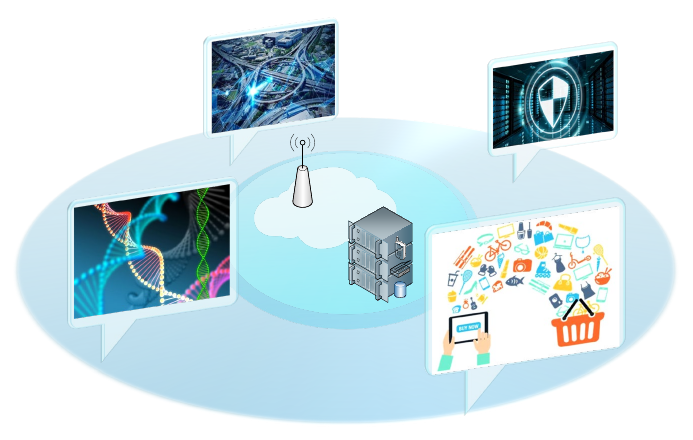"}
		\caption{The applications of the proposed mining method}
		\label{appli}
  \vspace{-1.2em}
	\end{figure}

\section{CONCLUSION}
\label{section:CONCLUSION}
Classical SPM only considers whether or not a given pattern occurs within a sequence and ignores the repetition of the pattern in this sequence. To overcome this drawback, this paper investigates repetitive SPM, called RNP mining, which considers the repetition of patterns in each sequence. To effectively discover all frequent RNPs, we propose the RNP-Miner algorithm, which consists of three parts: 1) data preprocessing, 2) candidate pattern generation, and 3) support calculation. In data preprocessing, the position dictionary structure can improve the algorithm's efficiency by reducing the number of scans of the original database. In candidate pattern generation, the enumeration strategy based on the I-extension and S-extension can generate candidate patterns. However, it generates many redundant candidate patterns. To improve efficiency, we proposed an itemset pattern join strategy based on the Apriori property. In support calculation, the repeated scanning of the database will take a large amount of time. Thus, we proposed the CSC algorithm, which calculates the supports of super-patterns based on the occurrence positions of sub-patterns and new extension itemsets. We evaluate the performance of RNP-Miner using real-life sequential databases, such as a transaction database. The extensive experimental results demonstrate that RNP-Miner not only has a better running performance than other competitive algorithms but also can discover the patterns that occur in a sequence repetitively. More importantly, RNP mining implements feature extraction, which improves the clustering performance.	
	
\section*{Acknowledgement}
This work was partly supported by the National Natural Science Foundation of China (62372154, 61976240, 62120106008), and in part by the Graduate Student Innovation Program of Hebei Province under Grant CXZZBS2022030.

\end{document}